\documentclass[journal]{IEEEtran}

\usepackage{amsmath,amssymb,mathtools}
\usepackage{amsthm}
\usepackage{graphicx}
\usepackage[caption=false,font=footnotesize]{subfig}
\usepackage{booktabs}
\usepackage{float}
\usepackage{algorithm}
\usepackage{algorithmic}
\usepackage{cite}
\usepackage{tikz}
\usepackage{balance}
\usetikzlibrary{arrows.meta,positioning,calc,shapes.geometric,shapes.misc}

\usepackage[hidelinks]{hyperref} 

\usepackage{xcolor}
\newcommand{\rev}[1]{\textcolor{black}{#1}}

\newtheorem{lemma}{Lemma}

	\title{\rev{Analog RF Computing: A New Paradigm for Energy-Efficient Edge AI Over MU-MIMO Systems}}
\author{Wentao Yu, \textit{Member, IEEE}, and Vincent W.S. Wong, \textit{Fellow, IEEE}

\thanks{Wentao Yu and Vincent W.S. Wong
are with the Department of Electrical and Computer Engineering, The University of British Columbia, Vancouver, BC, V6T 1Z4, Canada (e-mail: wentaoyu@ece.ubc.ca; vincentw@ece.ubc.ca). 	}
}

\begin{document}
\bstctlcite{BSTcontrol}
\maketitle

\vspace{-1.2em}
\begin{abstract}

\rev{Modern edge devices increasingly rely on neural networks for intelligent applications. However, conventional digital computing-based edge inference requires substantial memory and energy consumption. In analog radio frequency (RF) computing, a base station (BS) encodes the weights of the neural networks and broadcasts the RF waveforms to the clients. Each client reuses its passive mixer to multiply the received weight-encoded waveform with a locally generated input-encoded waveform. This enables wireless receivers to perform the matrix-vector multiplications (MVMs) that account for most of the computation burden in edge inference with ultra-low energy consumption. Unlike conventional downlink transmissions which are optimized for communications, analog RF computing requires a computing-centric physical layer that controls both the analog MVM accuracy and the energy consumption for inference. Motivated by this, in this paper, we propose a physical layer design framework for analog RF computing in multi-user multiple-input multiple-output (MU-MIMO) wireless systems. We derive tractable models for computing accuracy and energy consumption for inference, formulate a joint BS beamforming and client-side scaling problem subject to computing accuracy, transmit power, and hardware constraints, and develop a low-complexity algorithm to solve the non-convex problem. The proposed design provides client- and layer-specific accuracy control for both uniform- and mixed-precision inference. Simulations under Third Generation Partnership Project (3GPP) specifications show that analog RF computing can significantly reduce client-side energy consumption by nearly two orders of magnitude compared to digital computing, while mixed-precision inference requires even lower energy consumption than uniform-precision inference. Overall, these results establish analog RF computing over wireless networks as a promising paradigm for energy-efficient edge inference. }

\end{abstract}
\begin{IEEEkeywords}
Analog computing, beamforming, edge artificial intelligence, energy efficiency, \rev{multi-user multiple-input multiple-output}, radio frequency computing.
\end{IEEEkeywords}


\section{Introduction}\label{sec:introduction}

The growing demands for artificial intelligence (AI) services on edge devices (e.g., cameras, wearables, robots, and \rev{Internet of things (IoT) devices}) are pushing neural network (NN) inference towards the network edge \cite{letaief2022edge}. However, some of the existing edge inference paradigms may not be efficient for power-constrained and memory-limited devices. \rev{One approach is on-device digital computing, which runs AI inference locally on digital hardware, such as graphics processing units (GPUs) or specialized AI accelerator chips. However, this requires storing the NN model on the device and repeatedly transferring the weights and activations between memory and computing units during inference, which incurs a large memory overhead and high energy consumption \cite{horowitz2014energy}.} Another alternative is to offload the inference task to edge or cloud servers, which reduces local model storage but introduces latency and privacy concerns \cite{Tang2022deep,Shah-Mansouri2017joint}. These limitations motivate a new inference architecture that can preserve on-device AI inference while avoiding local model storage and the heavy memory and energy costs of conventional digital computing. In this paper, we study one such architecture, namely \textit{analog radio frequency (RF) computing}, which repurposes wireless receivers as low-power AI inference engines to overcome these limitations. 

Analog RF computing operates as follows. A base station (BS) broadcasts RF waveforms that encode NN weights to multiple clients over the wireless downlink. Each client obtains its local inference inputs and generates the corresponding input-encoded waveform. It then \rev{reuses} the passive RF mixer \cite{razavi2011rf}, originally used for in-phase and quadrature (I/Q) demodulation in the RF front end of a wireless receiver, as an analog multiplier to multiply the received weight-encoded waveform with the locally generated input-encoded waveform. In this way, the wireless receiver can directly perform matrix-vector multiplications (MVMs) between NN weights and inference inputs via analog RF computing. The resulting product waveform contains the desired MVM outputs and can be recovered through filtering and readout. Recent prototypes have demonstrated the remarkable inference energy efficiency of this architecture in a point-to-point link \cite{wise}. 

Analog RF computing offers several advantages over conventional edge inference architectures. First of all, it does not require dedicated local digital computing hardware. Any edge device equipped with a wireless receiver can perform on-device inference. Moreover, it reduces local model storage and memory requirements. \rev{The model weights are broadcast by the BS. The primary computing workloads of AI inference, i.e., MVMs, are carried out directly in the RF front end without storing weights in memory.} Furthermore, the energy consumption required for client-side inference is extremely low. \rev{Since the considered RF mixer is a passive device}, the primary client-side energy consumption is due to input-waveform generation and readout, which is significantly lower than that of digital computing. \rev{Collectively, these features make analog RF computing especially appealing for low-power AI inference on edge devices. }

Despite these advantages, analog RF computing introduces new challenges to the physical layer design in wireless systems. In conventional wireless systems, transceivers are optimized mainly for communication-oriented metrics such as data rate, outage probability, and signal-to-noise ratio (SNR) \cite{Bruno2024multiple}. On the other hand, since the wireless downlink is now utilized for edge inference, analog RF computing shifts the design focus to computing-centric metrics, most importantly the computing accuracy at the clients and the energy consumption for inference at both the clients and BS. These metrics depend jointly on the beamformers at the BS side, the scaling coefficients at the client side, the physical layer characteristics of the downlink wireless channels, and the hardware characteristics of passive mixers. Consequently, the physical layer in wireless systems must be redesigned from a computing-centric perspective, which is fundamentally different from existing formulations. 

To the best of our knowledge, this is the first research to formulate and solve the physical layer design problem for analog RF computing in multi-user multiple-input multiple-output (MU-MIMO) wireless systems. The central question is how a BS can utilize the downlink to not only transmit the NN model weights, but also to control the computing accuracy and inference energy consumption at the clients. Solutions to this question will open the door to multiple potential new wireless services, in which one BS can simultaneously serve multiple heterogeneous inference requests from different resource-constrained edge devices while satisfying the client-specific computing accuracy requirements and energy budgets. To this end, we develop tractable mathematical models for analog RF computing in MU-MIMO systems and jointly optimize the BS-side beamformers and the client-side scaling coefficients for minimizing the energy consumption for inference subject to computing accuracy, transmit power, and hardware-specific constraints. We then evaluate the proposed framework under Third Generation Partnership Project (3GPP) channel models and specifications, and provide system-level insights into the deployment of analog RF computing. 

\subsection{Related Works}

Existing literature relevant to this paper can be grouped according to how analog computing is physically realized. The first category studies analog computing-based AI inference beyond wireless systems. Optical and photonic processors, analog in-memory computing, and metamaterial-based computing all use physical dynamics to perform linear operations more efficiently than digital computing platforms \cite{shen2017nanophotonic,ielmini2018imc,silva2014metamaterials,liu2022metasurface}. Their common limitation is that they rely on specialized computing hardware instead of wireless receivers which are widely available in edge devices. 

The second category is wireless analog computation through propagation or superposition. Classical analog function computation and over-the-air computation (AirComp) utilize the multiple-access wireless channel as an analog adder. They are designed for addition or aggregate functions across distributed transmitters \cite{goldenbaum2013analogfunc,yang2020federated,li2023iscco,wen2024taskaircomp}. AirNN \cite{sanchez2022airnn} and AirFC \cite{reusmuns2023airfc,hua2026risnn} move closer to NN inference by showing that the wireless propagation process can emulate convolutional and fully connected layers when the radio environment and transmitted waveforms are carefully engineered. Microwave linear analog computers (MiLACs) utilize reconfigurable microwave networks to perform analog signal processing and communications tasks such as matrix inversion and channel estimation \cite{milac_part1,Matteo2025TSP2,Matteo2026MIMO}. The aforementioned works are related to ours in spirit, but they rely on wireless propagation or superposition as part of the computing process and hence require the wireless channel or radio environment to be carefully engineered for specific computing tasks. On the other hand, our considered analog RF computing does not use the wireless channel itself as the computing medium. The analog computing takes place solely in the RF front end of wireless receivers. 

The third and closest category is RF-domain inference with broadcast NN model weights. The authors of \cite{wise} experimentally demonstrated over-the-air weight broadcasting and passive mixer-based inference on a software-defined radio platform. The use of mixers for analog MVMs was originally proposed in this work. Specifically, the work in \cite{wise} showed the device-level feasibility of using passive mixers to carry out analog MVM under fixed transceiver configurations. However, it did not address how a wireless system should allocate physical layer resources when analog RF computing becomes a downlink edge inference service. This paper fills this gap by developing a computing-centric physical layer design for analog RF computing in MU-MIMO wireless systems. 

\begin{figure*}[t]
  \centering
  \includegraphics[width=0.99\textwidth]{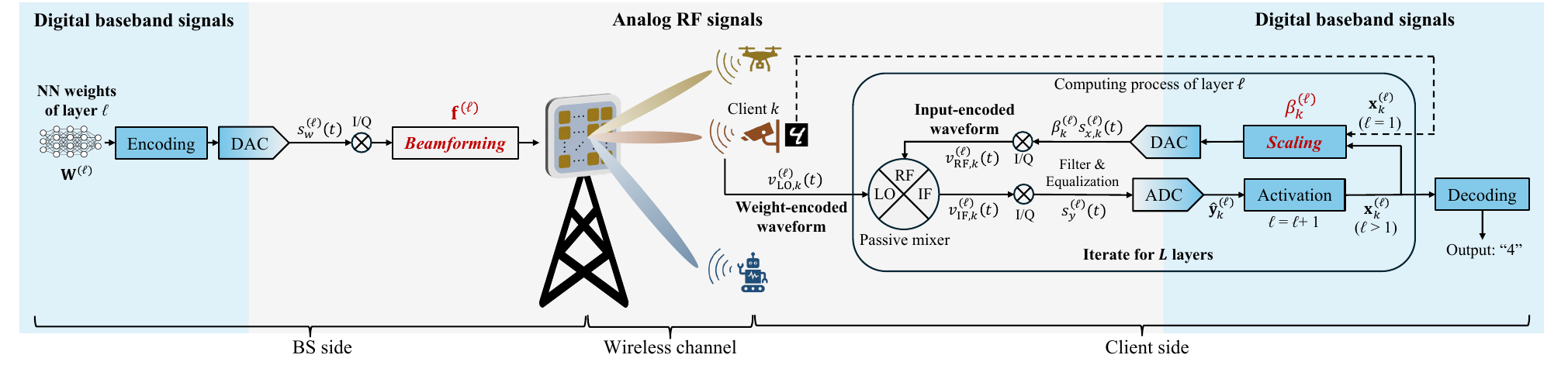}
  \caption{{Schematic diagram of analog RF computing-based edge inference for one representative client $k$ in the downlink of a MU-MIMO system. For each NN layer $\ell$, the BS first encodes the layer weight matrix $\mathbf{W}^{(\ell)}$ into a digital baseband waveform $s_w^{(\ell)}(t)$, converts it to an RF waveform through digital-to-analog converter (DAC) and I/Q modulation, applies the beamformer $\mathbf{f}^{(\ell)}$, and broadcasts the resulting weight-encoded waveform over the wireless channel. At client $k$, the received weight-encoded waveform $v_{\mathrm{LO},k}^{(\ell)}(t)$ drives the LO port of the passive mixer. Meanwhile, client $k$ encodes its local input $\mathbf{x}_k^{(\ell)}$ into an input-encoded waveform, scales it by $\beta_k^{(\ell)}$, and applies the resulting RF waveform $v_{\mathrm{RF},k}^{(\ell)}(t)$ to the RF port of the mixer. For layer $\ell=1$, $\mathbf{x}_k^{(1)}$ is obtained from the raw input, while for layer $\ell>1$, $\mathbf{x}_k^{(\ell)}$ is obtained by re-encoding the previous-layer activation. The passive mixer multiplies the weight- and input-encoded waveforms in the RF domain, producing an IF waveform $v_{\mathrm{IF},k}^{(\ell)}(t)$ that contains the MVM result. After filtering, equalization, and ADC, the baseband output is denoted by $\widehat{\mathbf{y}}_k^{(\ell)}$. After the activation function, the output is passed on to the next layer. After the last layer, the digital decoder produces the final inference result.}}
  \label{fig:system_schematic}
\end{figure*}

\subsection{Contributions}

\rev{This paper proposes an analog RF computing-based edge AI inference framework in MU-MIMO systems. Specifically, a multi-antenna BS broadcasts the NN weight-encoded waveforms to multiple clients. Each client reuses the passive mixer available in its wireless receiver to perform analog MVMs with its locally generated input-encoded waveforms. For this novel wireless system, we study the computing-centric physical layer design. The main contributions are summarized as follows.}

\begin{itemize}
    \item We establish the baseband and RF waveform construction for analog RF computing in MU-MIMO wireless systems and derive a tractable baseband equivalent model. Building on this, we explicitly characterize how the BS-side beamforming and client-side scaling coefficients jointly determine the post-equalization normalized mean squared error (NMSE) of analog MVMs, as well as the inference energy consumption at both the BS and client sides. 
    
    \item We formulate a joint BS-client energy minimization problem for simultaneously serving multiple clients with heterogeneous computing accuracy requirements. The formulated problem incorporates per-client NMSE constraints, BS-side and client-side power limits, and hardware constraints of mixers. The objective jointly captures BS-side and client-side inference energy consumption, including waveform generation, analog-to-digital conversion (ADC) \cite{razavi2011rf}, and digital decoding. It reveals a tradeoff between the energy consumption at the BS and the energy consumption at the clients for analog RF computing. 
  
    \item We derive the closed-form optimal client-side scaling coefficients and prove that an optimal BS beamformer always lies in the channel subspace. Based on these, we develop a reduced-dimensional successive convex approximation (SCA) algorithm. Its complexity scales with the number of clients rather than that of BS antennas. Simulation results show that the proposed algorithm matches its full-dimensional counterpart with substantially lower runtime. Results also demonstrate that analog RF computing enables highly energy-efficient edge inference compared to conventional digital computing under realistic settings according to 3GPP specifications. 
    
    \item We further discuss uniform- and mixed-precision inference across NN layers. Since the per-layer computing accuracy can be controlled by the BS beamformer and client-side scaling, the proposed analog RF computing framework can also optimize layerwise computing accuracy targets under an energy budget. Numerical results show that mixed precision further improves the inference performance under matched client-side energy budgets. 
\end{itemize}

\subsection{Paper Organization and Notations}

The remainder of this paper is organized as follows. Section~\ref{sec:mixer_mvm} presents the system model and foundations of analog RF computing-based on-device AI inference. Section~\ref{sec:problem_formulation} derives the tractable models for the computing accuracy and energy consumption of analog RF computing, and formulates the optimization problem for computing-centric physical layer design. Section~\ref{sec:solver} develops the proposed reduced-dimensional SCA algorithm. Section~\ref{sec:uniform_mixed_precision} discusses uniform- and mixed-precision inference. The performance evaluation is presented in Section~\ref{sec:sim_results}. Conclusion is given in Section~\ref{sec:conclusion}.

\textit{Notations:} In this paper, scalars are denoted by italic letters (e.g., $a$), vectors by boldface lowercase letters (e.g., $\mathbf{a}$), and matrices by boldface uppercase letters (e.g., $\mathbf{A}$). $(\cdot)^{\mathsf T}$ and $(\cdot)^{\mathsf H}$ refer to transpose and Hermitian transpose, respectively. $\Re\{\cdot\}$ returns the real part of a complex quantity, $\|\cdot\|_2$ denotes the Euclidean norm, and $\mathcal{CN}(\boldsymbol{\mu},\mathbf{\Sigma})$ denotes a complex Gaussian distribution with mean $\boldsymbol{\mu}$ and covariance $\mathbf{\Sigma}$. 

\section{System Model of Analog RF Computing-Based Edge Inference}\label{sec:mixer_mvm}

NN inference consists of repeated linear operations interleaved with lightweight nonlinear operations. The linear operations that account for most of the computation can be represented as MVMs between the weights and inputs of each layer \cite{sze2017efficient}. Consider that the NN has $L$ layers. Let $\mathcal{L}\triangleq\{1,\ldots,L\}$ denote the set of layer indices. At layer $\ell\in\mathcal{L}$, the weight matrix is $\mathbf{W}^{(\ell)}\in\mathbb{C}^{M^{(\ell)}\times N^{(\ell)}}$. Let $\mathcal{K}\triangleq\{1,\ldots,K\}$ denote the set of clients. Each client $k\in\mathcal{K}$ has an input vector $\mathbf{x}_k^{(\ell)}\in\mathbb{C}^{N^{(\ell)}}$. When layer $\ell=1$, the vector $\mathbf{x}_k^{(1)}$ is the initial input to the network for client $k$, such as raw images captured by an edge camera. For layer $\ell>1$, $\mathbf{x}_k^{(\ell)}$ is the layer activation, i.e., the previous layer's output after the activation function. The desired output of the MVM between the weights and inputs is denoted by $\mathbf{y}_k^{(\ell)}\triangleq \mathbf{W}^{(\ell)}\mathbf{x}_k^{(\ell)}$, after which the activation function and other nonlinear operations produce the input of the next layer $\mathbf{x}_k^{(\ell+1)}$. This procedure is repeated until all layers of the NN have been computed. 

The analog RF computing architecture carries out the MVM of each layer in the RF front end of wireless receivers, as shown in Fig.~\ref{fig:system_schematic}. The passive mixer is a three-port device. It includes the local oscillator (LO), intermediate frequency (IF), and RF ports \cite{razavi2011rf}. For each NN layer, the BS broadcasts a weight-encoded RF waveform to drive the LO port of the passive mixer at each client. Each client then locally generates an input-encoded RF waveform to drive the RF port. The passive mixer multiplies the two RF waveforms to obtain the desired MVM through analog RF computing \cite{wise}. The results of analog MVM are recovered at the IF port of the passive mixer, and are converted to the digital baseband, where the lightweight nonlinear activation functions are applied. The resulting activations are then mapped back to an RF waveform for the next layer, which is repeated until the last NN layer. 

This section introduces the end-to-end workflow of analog RF computing-based edge AI. We first present the weight- and input-encoded RF waveforms that are inputs to the LO- and RF-ports of the passive mixer, respectively. We then discuss how their corresponding baseband waveforms should be constructed for carrying out analog MVMs. Next, we describe the hardware characteristics of the passive mixer. We discuss its two operating regions and explain how they impact the computing accuracy. Finally, we derive a baseband equivalent model for analog RF computing, which forms the basis of the physical layer design problem in Section~\ref{sec:problem_formulation}. 

\subsection{Construction of RF Waveforms}\label{sec:mumimo_ports}
We consider the downlink of a MU-MIMO system where the BS employs $N_\text{t}$ antennas and serves $K$ clients\footnote{The clients are chosen to match the low-complexity edge devices targeted by this work, such as surveillance cameras, wearables, and industrial sensors. These devices are representative use cases of fifth-generation (5G) reduced-capability (RedCap) user equipment \cite{3gpp_ts38306}, which, in Frequency Range 1 (FR1), typically supports only one downlink MIMO layer with one receive branch.} simultaneously. The downlink channel between the BS and client $k\in\mathcal{K}$ is denoted by $\mathbf{h}_k\in\mathbb{C}^{N_\text{t}}$. As illustrated in Fig.~\ref{fig:system_schematic}, for NN layer $\ell\in\mathcal{L}$, the BS forms a complex-valued weight-encoded baseband waveform $s_w^{(\ell)}(t)$ after baseband processing, upconverts it to carrier frequency $f_w$ through I/Q modulation, and applies a broadcast beamforming vector $\mathbf{f}^{(\ell)}\in\mathbb{C}^{N_\text{t}}$ to send it to $K$ clients simultaneously. The beamforming gain at client $k$ for layer $\ell\in \mathcal{L}$ is $g_k^{(\ell)} \triangleq \mathbf{h}_k^{\mathsf H}\mathbf{f}^{(\ell)} \in \mathbb{C}$.
Hence, after downlink propagation, the real-valued weight-encoded RF waveform received by client $k$ is given by
\begin{equation}
 v_{\mathrm{LO},k}^{(\ell)}(t) = \Re\big\{\sqrt{P_{w,0}}\,g_k^{(\ell)}\,s_w^{(\ell)}(t)\,e^{j2\pi f_w t}\big\}, 
\label{eq:lo_passband}
\end{equation}
which drives the LO port of the passive mixer, with $P_{w,0}$ being a fixed reference power level for the waveform.

In parallel, as shown in Fig.~\ref{fig:system_schematic}, client $k$ generates a complex-valued input-encoded waveform $s_{x,k}^{(\ell)}(t)$ and applies a scaling coefficient $\beta_k^{(\ell)}\in\mathbb{C}$ at the baseband before upconverting it to carrier frequency $f_x$ via I/Q modulation. Hence, the real-valued input-encoded waveform applied to the RF port of the passive mixer is given by
\begin{equation}
  v_{\mathrm{RF},k}^{(\ell)}(t) = \Re\big\{\sqrt{P_{x,0}}\,\beta_k^{(\ell)}\,s_{x,k}^{(\ell)}(t)\,e^{j2\pi f_x t}\big\},
\label{eq:rf_passband}
\end{equation}
where $P_{x,0}$ is a fixed reference power level for the waveform. 

Under the ideal case, the output at the IF port of the passive mixer is proportional to the product of the LO-port weight-encoded waveform and the RF-port input-encoded waveform. Although the passive mixer is driven by the two real-valued RF waveforms in~\eqref{eq:lo_passband} and \eqref{eq:rf_passband}, the analog RF computing process is most conveniently described using the corresponding complex-valued baseband waveforms, i.e., $s_w^{(\ell)}(t)$ and $s_{x,k}^{(\ell)}(t)$. We next discuss how to construct baseband waveforms to realize the analog RF computing-based MVM. We limit our introduction to the key steps essential for the physical layer design problem considered in this paper, and refer interested readers to \cite{wise} for detailed baseband implementations.

\subsection{Construction of Baseband Waveforms}\label{sec:tone_assignment}

The NN weights $\mathbf{W}^{(\ell)}$ may have a large dimension $M^{(\ell)}$, which may not be loaded at once onto a single orthogonal frequency division multiplexing (OFDM) symbol. Each layer $\ell$ is hence processed in smaller row blocks of size $M^{\prime(\ell)}$, as shown in Figs.~\ref{fig:waveform_gen}\subref{fig:waveform_gen_a} and \subref{fig:waveform_gen_b}. 
One OFDM symbol computes a block with input length $N^{(\ell)}$ and output length $M^{\prime(\ell)}$. Let $Q^{(\ell)} \triangleq \left\lceil \frac{M^{(\ell)}}{M^{\prime(\ell)}} \right\rceil$ denote the number of row blocks. After zero-padding the last block if needed, we represent the $b$-th row block of $\mathbf{W}^{(\ell)}$ as $\mathbf{W}_{[b]}^{(\ell)}\in\mathbb{C}^{M^{\prime(\ell)}\times N^{(\ell)}}$, where $b\in\{1,\dots,Q^{(\ell)}\}$. The corresponding output block at client $k$ is $\mathbf{y}_{k,[b]}^{(\ell)} = \mathbf{W}_{[b]}^{(\ell)}\mathbf{x}_k^{(\ell)}$. For clarity, the rest of this subsection focuses on one representative block and omits the indices $k$, $\ell$, and $b$, so that the computing target becomes $\mathbf{y}=\mathbf{W}\mathbf{x}$ with $\mathbf{W}\in\mathbb{C}^{M'\times N}$ and $\mathbf{y}\in\mathbb{C}^{M'}$. To accommodate anti-alias (AA) filtering, we encode a slightly wider band of $\widetilde{M}$ entries, defined as $\widetilde{M}\triangleq (1+\vartheta)M^{\prime}$, where $\vartheta$ is a non-negative guard factor, as illustrated in Fig.~\ref{fig:waveform_gen}\subref{fig:waveform_gen_c}. Only $M^{\prime}$ of these entries carry useful MVM outputs, while the remaining zero-valued entries constitute the guard margin. 

\begin{figure}[t]
  \centering
  \subfloat[]{\includegraphics[width=0.14\textwidth]{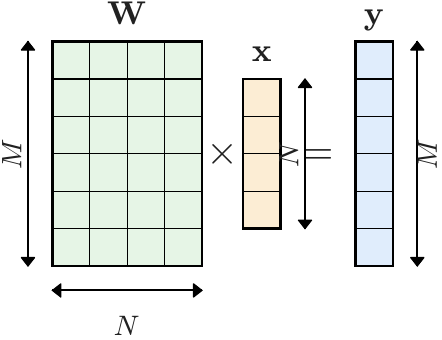}\label{fig:waveform_gen_a}} \,\,\,
  \subfloat[]{\includegraphics[width=0.14\textwidth]{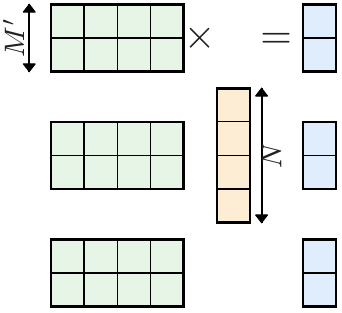}\label{fig:waveform_gen_b}} \,\,\,
  \subfloat[]{\includegraphics[width=0.14\textwidth]{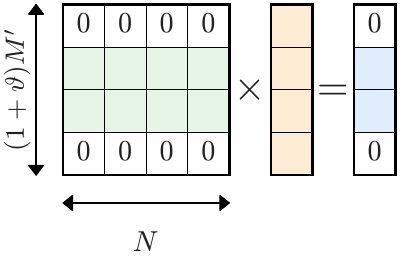}\label{fig:waveform_gen_c}}\\
  \subfloat[]{\includegraphics[width=0.22\textwidth]{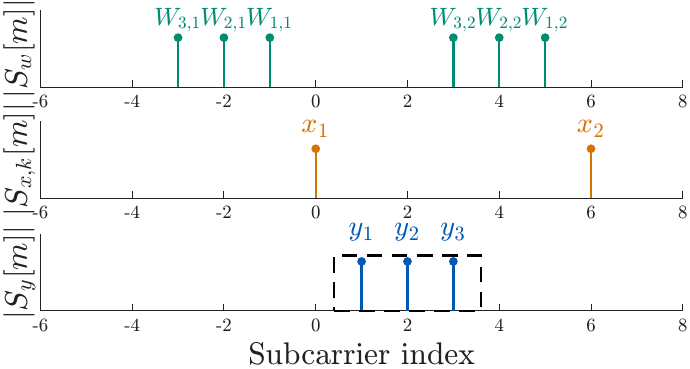}\label{fig:waveform_gen_d}} \,\,\,
  \subfloat[]{\includegraphics[width=0.22\textwidth]{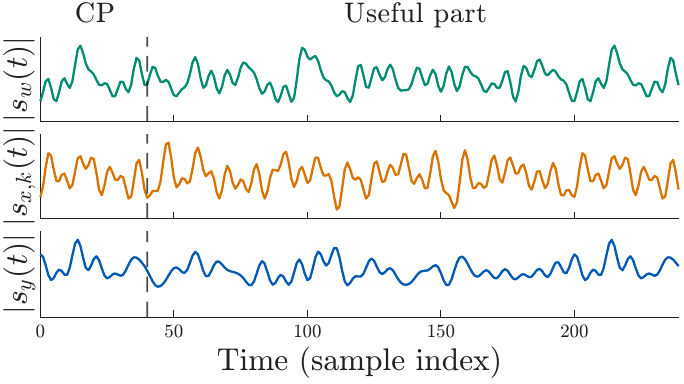}\label{fig:waveform_gen_e}}
  \caption{An illustration of the baseband waveform construction and subcarrier mapping for passive mixer-based analog MVM. }
  \label{fig:waveform_gen}
\end{figure}

We construct the useful baseband input-encoded and weight-encoded waveforms as
\begin{equation}
  s_x(t) = \sum_{n\in\mathcal{N}} x_n \, e^{j2\pi \nu_x(n)\Delta f t}, \,\, 0\le t<T_{\mathrm{u}}, \,\, \text{and}
  \label{eq:sx_waveform}
\end{equation}
\begin{equation}
  s_w(t) = \sum_{m\in\mathcal{M}^\prime}\sum_{n\in\mathcal{N}} W_{m,n} \, e^{j2\pi \nu_w(m,n)\Delta f t}, \,\, 0\le t<T_{\mathrm{u}},
  \label{eq:sw_waveform}
\end{equation}
respectively, in which $\mathcal{N}\triangleq\{1,\ldots,N\}$, $\mathcal{M}^\prime\triangleq\{1,\ldots,M^{\prime}\}$, $x_n$ is the $n$-th entry of the input vector $\mathbf{x}$, $W_{m,n}$ denotes the $(m,n)$-th entry of the weight matrix $\mathbf{W}$, $\nu_x(n)$ and $\nu_w(m,n)$ denote the integer subcarrier indices used for $x_n$ and $W_{m,n}$, respectively, $\Delta f$ is the subcarrier spacing, and $T_{\mathrm{u}}\triangleq 1/\Delta f$ denotes the useful OFDM symbol duration. According to \eqref{eq:sx_waveform} and \eqref{eq:sw_waveform}, $x_n$ is loaded onto the $\nu_x(n)$-th subcarrier of the input-encoded waveform, while $W_{m,n}$ is loaded onto the $\nu_w(m,n)$-th subcarrier of the weight-encoded waveform. To match the notation in Fig.~\ref{fig:waveform_gen}, we denote the fast Fourier transform (FFT) coefficients of $s_x(t)$ and $s_w(t)$ by $S_x[\cdot]$ and $S_w[\cdot]$, respectively, so that $S_x[\nu_x(n)] = x_n$ and $S_w[\nu_w(m,n)] = W_{m,n}$ on the loaded subcarriers. To realize analog MVM, $\nu_x(n)$ is set as
\begin{equation}
  \nu_x(n)\triangleq \nu_0+(n-1)\widetilde{M},\qquad n\in\mathcal{N},
  \label{eq:nux_rule}
\end{equation}
in which $\nu_0$ is the reference subcarrier location. The weight $W_{m,n}$ is loaded $m$ subcarriers below $\nu_x(n)$, i.e., 
\begin{equation}
  \nu_w(m,n)\triangleq \nu_x(n)-m,\qquad m\in\mathcal{M}^\prime,~ n\in\mathcal{N}.
  \label{eq:nuw_rule}
\end{equation}
The expressions in \eqref{eq:sx_waveform} and \eqref{eq:sw_waveform} specify the loaded subcarrier content that realizes the analog MVM. In implementation, the corresponding OFDM symbols are mapped to the actual mixer-input waveforms under a fixed waveform-normalization convention before upconversion. This only introduces a known common gain and does not change the subcarrier relations in \eqref{eq:nux_rule} and \eqref{eq:nuw_rule}. For notational simplicity, we keep using $s_x(t)$ and $s_w(t)$ for the resulting normalized mixer-input waveforms.
Fig.~\ref{fig:waveform_gen}\subref{fig:waveform_gen_d} illustrates an example of the subcarrier loading step, after which a cyclic prefix (CP) of duration $T_{\mathrm{cp}}\triangleq \varpi T_{\mathrm{u}}$ is appended, where $\varpi $ denotes the CP overhead. This yields a total symbol duration of $(1+\varpi)T_{\mathrm{u}}$, as shown in Fig.~\ref{fig:waveform_gen}\subref{fig:waveform_gen_e}. 

The subcarrier loading rules in \eqref{eq:nux_rule} and \eqref{eq:nuw_rule} are motivated by the fact that the passive mixer performs pointwise multiplication of the LO-port and RF-port waveforms and outputs the result at the IF port. After CP removal, pointwise multiplication of the waveforms over the useful interval becomes circular convolution of their loaded subcarriers. After selecting the difference band around $f_y\triangleq f_x-f_w$, we denote the resulting IF-port baseband waveform by $s_y(t)$. The product of the pair $(x_{n},W_{m,n})$, for all $n\in\mathcal{N}$, always lands on the $m$-th subcarrier of $s_y(t)$ because $\nu_x(n)-\nu_w(m,n)=m$. 

Accordingly, the $m$-th FFT coefficient of $s_y(t)$, denoted by $S_y[m]$, satisfies
\begin{equation}
  S_y[m]\propto \sum_{n\in\mathcal{N}}W_{m,n}x_n,\qquad m\in\mathcal{M^\prime},
  \label{eq:sy_block}
\end{equation}
which is the desired analog MVM result for the current block up to a constant factor determined by both the mixer and the physical layer design. Repeating this procedure over the $Q$ row blocks yields the full layer output $\mathbf{y}_k$. 

\subsection{Hardware Characteristics of Passive Mixers}\label{sec:device_regimes}
The baseband waveform construction above yields the desired MVM only when the passive mixer behaves approximately as a multiplier of two RF waveforms. However, this is not always true and depends on the operating region of the mixer. In the following, we discuss the two operating regions of passive mixers and then present a baseband equivalent model for analog RF computing, which is key to the physical layer design to be discussed later.

\begin{figure}[t]
  \centering
  \includegraphics[width=0.95\columnwidth]{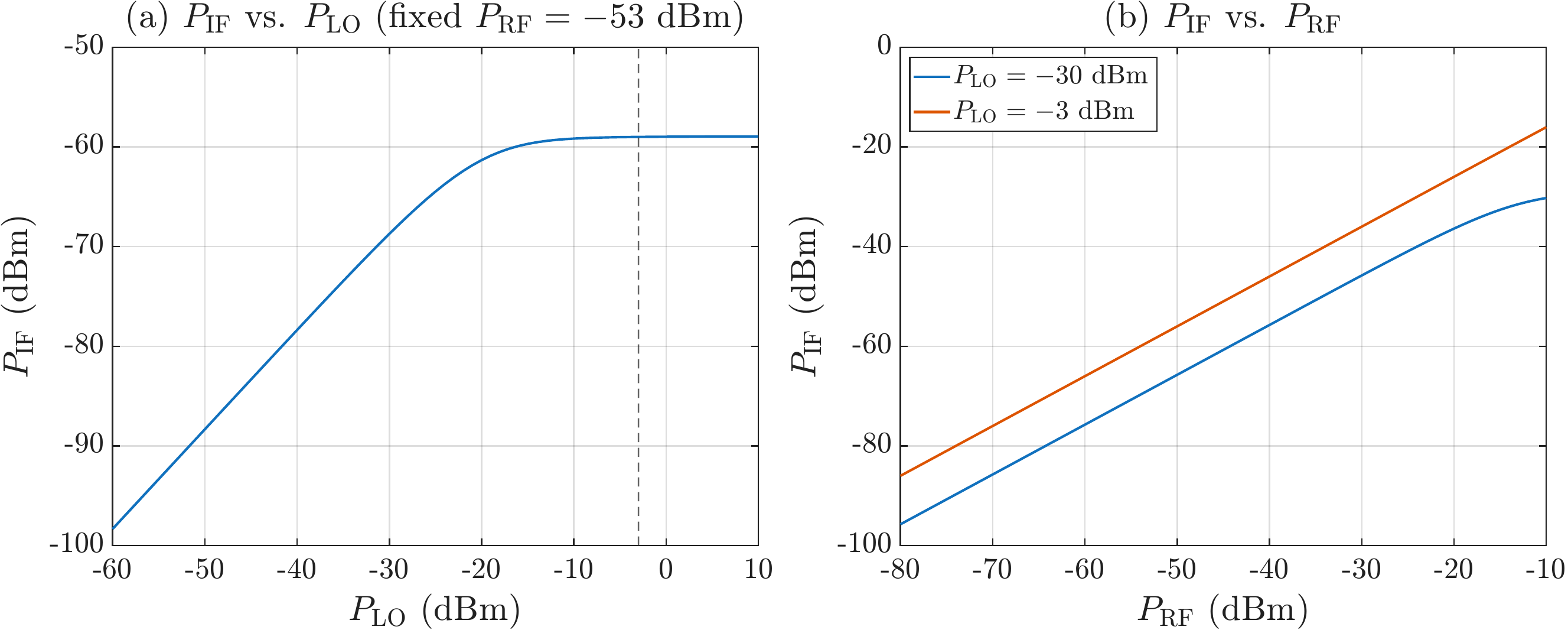}
  \caption{IF-port output power scaling versus LO- and RF-port input powers under a calibrated diode-ring mixer model \cite{zem4300_datasheet}. (a) shows the transition from the small-signal region to the LO-saturated region, and the vertical line indicates the transition point used to define $a_{\max}$. $P_\mathrm{IF}$ first grows proportionally with  $P_\mathrm{LO}$ and then saturates. (b) shows that $P_\mathrm{IF}$ grows proportionally with $P_\mathrm{RF}$ without saturation. }
  \label{fig:mixer_regions}
\end{figure}

\subsubsection{Two Operating Regions of Passive Mixers}
The passive mixer considered in this paper changes its input-output behavior with the received LO drive \cite{razavi2011rf,zem4300_datasheet}. For a fixed RF-port power $P_{\mathrm{RF}}$, Fig.~\ref{fig:mixer_regions}(a) shows that the IF-port power $P_{\mathrm{IF}}$ first grows approximately linearly with the LO-port power $P_{\mathrm{LO}}$ and then flattens. We refer to the former as the \emph{small-signal} region and to the latter as the \emph{LO-saturated} region. Fig.~\ref{fig:mixer_regions}(b) shows that, for a fixed LO-port power, $P_{\mathrm{IF}}$ remains linear in $P_{\mathrm{RF}}$. 


In the small-signal region, the output of the mixer at the IF port behaves approximately as a multiplier of LO-port and RF-port waveforms in~\eqref{eq:lo_passband} and \eqref{eq:rf_passband}, i.e., 
\begin{equation}
  v_{\mathrm{IF},k}^{(\ell)}(t)\approx \rho_{\mathrm{mixer}}\,v_{\mathrm{LO},k}^{(\ell)}(t)\,v_{\mathrm{RF},k}^{(\ell)}(t),
  \label{eq:miwen_mixer_smallsig}
\end{equation}
which yields the scaling $P_{\mathrm{IF}}\propto P_{\mathrm{LO}}P_{\mathrm{RF}}$.
In the LO-saturated region, the output of the LO port saturates to a constant level $V_{\mathrm{LO,sat}}>0$, i.e., 
\begin{equation}
  v_{\mathrm{IF},k}^{(\ell)}(t)\approx \rho_{\mathrm{mixer}}\,V_{\mathrm{LO,sat}}\,v_{\mathrm{RF},k}^{(\ell)}(t)\,\mathrm{sgn}\!\big(v_{\mathrm{LO},k}^{(\ell)}(t)\big), 
  \label{eq:miwen_mixer_switch}
\end{equation}
where $\mathrm{sgn}(\cdot)$ denotes the sign function. 

LO saturation can be undesirable for analog RF computing for two reasons. First, once the LO port enters saturation, further increasing the LO-port power yields little increase in the power of the recovered MVM result. Since analog RF computing is ultimately limited by noise, any additional LO power is then wasted. Second, LO saturation changes the LO waveform that participates in the analog multiplication. In the LO-saturated region, the mixer is effectively driven by a hard-limited waveform $\mathrm{sgn}(v_{\mathrm{LO},k}^{(\ell)}(t))$ instead of the original $v_{\mathrm{LO},k}^{(\ell)}(t)$. To illustrate the consequence, suppose the original LO waveform contains two subcarriers at indices $p$ and $q$. After hard limiting, the waveform is no longer composed of only these two tones, and additional tones appear at other indices such as $2p-q$ and $2q-p$. When this distorted LO waveform multiplies the RF-port waveform, these extra LO tones shift part of the RF spectrum to unintended IF subcarriers. Some of those components can land on the same IF subcarriers that carry the desired MVM outputs. They therefore appear as in-band distortion and cannot be removed by AA filtering. To keep the model simple while still reflecting this operating boundary, we replace the received coefficient $g_k^{(\ell)}$ by its clipped version
\begin{equation}
  a_k^{(\ell)} \triangleq \min\big\{|g_k^{(\ell)}|,a_{\max}\big\}\,e^{j\angle g_k^{(\ell)}}.
\label{eq:ak_clip}
\end{equation}
That is, $a_k^{(\ell)}$ preserves the phase of $g_k^{(\ell)}$ while capping its magnitude at the saturation boundary $a_{\max}$. Here $a_{\max}>0$ is a calibrated saturation boundary. The received LO-port power is $P_{\mathrm{LO},k}^{(\ell)}=P_{w,0}|g_k^{(\ell)}|^2$. Thus, setting an LO-drive threshold $P_{\mathrm{LO,th}}$ corresponds to
\begin{equation}
  P_{w,0}a_{\max}^2 = P_{\mathrm{LO,th}}, \qquad a_{\max}=\sqrt{\tfrac{P_{\mathrm{LO,th}}}{P_{w,0}}},
  \label{eq:amax_relation}
\end{equation}
and enforcing the beamforming gain $|a_k^{(\ell)}|\le a_{\max}$ keeps the mixer opearting in the small-signal region. 

\subsection{Baseband Equivalent Model for Analog RF Computing}\label{sec:compute_model}
We will focus on the small-signal region, because only in this region does the passive mixer behave approximately as a direct multiplier. Combining \eqref{eq:miwen_mixer_smallsig} with the baseband waveform construction in Section~\ref{sec:tone_assignment}, each decoded IF subcarrier is proportional to the corresponding elements of the MVM result. After stacking the $Q^{(\ell)}$ decoded output blocks of layer $\ell$, we model the output of layer $\ell$ at client $k$ as
\begin{equation}
  \widetilde{\mathbf{y}}_k^{(\ell)} = a_k^{(\ell)}\,\mathbf{W}^{(\ell)}\big(\beta_k^{(\ell)}\mathbf{x}_k^{(\ell)}\big) + \mathbf{n}_k^{(\ell)},
  \label{eq:ytilde}
\end{equation}
where $\mathbf{n}_k^{(\ell)}\sim \mathcal{CN}(\mathbf{0},\sigma_{0}^{2,(\ell)}\mathbf{I}_{M^{(\ell)}})$ is an effective additive term capturing the thermal noise introduced in the process of analog RF computing, which is detailed in Section~\ref{sec:Computing_Accuracy_Metric}. The error of analog computing is mainly determined by the ratio between the signal power of analog MVM results and the noise power. Equation~\eqref{eq:ytilde} is the baseband equivalent model for analog RF computing in the small-signal region. 

From~\eqref{eq:ytilde}, we can see that the controllable part of the analog computing error is mainly the end-to-end gain $a_k^{(\ell)}\beta_k^{(\ell)}$ since the noise is only determined by the hardware and the environment. This observation is central for the physical layer design. The BS-side beamformer determines $a_k^{(\ell)}$ and the received amplitude of the LO-port waveform, while the client-side scaling determines the RF-port amplitude through $\beta_k^{(\ell)}$. Because their effects are multiplicative, the same computing accuracy can be achieved by different tradeoffs of the energy consumption at the BS and the client sides. To achieve the same computing accuracy, increasing the BS beamforming gain can reduce the required client scaling, and vice versa. Depending on different BS-side and client-side energy budgets, different points on this tradeoff become desirable. This tradeoff is constrained, since both the BS-side and the client side transmit powers are limited, and the mixer small-signal region is determined by the specific hardware. Therefore, the physical layer design must study the joint design of BS-side beamformer and client-side scaling subject to computing accuracy, transmit power, and hardware constraints. In the next section, we derive explicit expressions for the computing accuracy and inference energy consumption as functions of the beamforming vector $\mathbf{f}^{(\ell)}$ and scaling coefficient $\beta_k^{(\ell)}$, and then formulate the corresponding optimization problem.

\section{Performance Metrics and Problem Formulation}\label{sec:problem_formulation}

In this section, we derive the performance metrics of analog RF computing as functions of the BS-side beamformer and client-side scaling coefficients. We first quantify analog computing errors by using normalized mean square error (NMSE) and then introduce an inference energy consumption model that captures BS-side beamforming, client-side scaling, and readout costs. We then formulate a joint BS beamforming and client-scaling problem for the physical layer design of analog RF computing in MU-MIMO wireless systems, which can satisfy per-client computing accuracy constraints with the minimum inference energy consumption. 

\subsection{Computing Accuracy Metric}\label{sec:Computing_Accuracy_Metric}

Following from \eqref{eq:ytilde}, client $k$ estimates the end-to-end gain $a_k^{(\ell)}\beta_k^{(\ell)}$ based on pilots and applies scalar equalization. This gives $\widehat{\mathbf{y}}_k^{(\ell)} \triangleq \widetilde{\mathbf{y}}_k^{(\ell)}/(a_k^{(\ell)}\beta_k^{(\ell)})=\mathbf{y}_k^{(\ell)}+\widetilde{\mathbf{n}}_k^{(\ell)}$, where $\widetilde{\mathbf{n}}_k^{(\ell)}\sim\mathcal{CN}(\mathbf{0},\sigma_{k,\mathrm{eq}}^{2,(\ell)}\mathbf{I}_{M^{(\ell)}})$ and $\sigma_{k,\mathrm{eq}}^{2,(\ell)}\triangleq\sigma_{0}^{2,(\ell)}/(|a_k^{(\ell)}|^2|\beta_k^{(\ell)}|^2)$. The post-equalization NMSE of the analog MVM is
\begin{equation}
  \mathrm{NMSE}_{k}^{(\ell)}
  \triangleq
  \frac{\mathbb{E}[\|\widehat{\mathbf{y}}_k^{(\ell)}-\mathbf{y}_k^{(\ell)}\|_2^2]}{\mathbb{E}[\|\mathbf{y}_k^{(\ell)}\|_2^2]}.
  \label{eq:nmse_def}
\end{equation}
Denote $\bar S_k^{(\ell)}\triangleq \mathbb{E}[\|\mathbf{W}^{(\ell)}\mathbf{x}_k^{(\ell)}\|_2^2]/M^{(\ell)}$ as the average clean output power per decoded entry, where the expectation is taken over the inputs of client $k$ at layer $\ell$ and the channel. Then we have $\mathrm{NMSE}_{k}^{(\ell)}=\sigma_{k,\mathrm{eq}}^{2,(\ell)}/\bar S_k^{(\ell)}$, i.e., the analog computing error is determined by the noise power relative to the clean output power. Notice that the NMSE metric depends on $\bar S_k^{(\ell)}$, which changes with the input to the current layer at each client, but the BS does not know the client-side input during transmission. Instead, online optimization can only use input statistics from an offline calibration dataset. We therefore replace $\bar S_k^{(\ell)}$ with a reference $\bar S_{\mathrm{ref}}^{(\ell)}\triangleq \mathbb{E}_{\mathrm{cal}}[\|\mathbf{W}^{(\ell)}\mathbf{x}^{(\ell)}\|_2^2]/M^{(\ell)}$, where $\mathbb{E}_{\mathrm{cal}}$ denotes the expectation over the offline calibration dataset. The reference NMSE is given by $\mathrm{NMSE}_{k,\mathrm{ref}}^{(\ell)}\triangleq\sigma_{0}^{2,(\ell)}/(|a_k^{(\ell)}|^2|\beta_k^{(\ell)}|^2\bar S_{\mathrm{ref}}^{(\ell)})$.


Under the mixer model in Section~\ref{sec:device_regimes}, the LO-port and RF-port powers are $P_{\mathrm{LO},k}^{(\ell)}=P_{w,0}|g_k^{(\ell)}|^2$ and $P_{\mathrm{RF},k}^{(\ell)}=P_{x,0}|\beta_k^{(\ell)}|^2$, respectively. The small-signal multiplier model in \eqref{eq:miwen_mixer_smallsig} implies that the desired output power scales with their product, while the noise power over bandwidth $B$ scales with $k_B T_0 B/\rho_{\mathrm{nf}}$. This leads to the reference SNR constant $\kappa^{(\ell)}\triangleq \rho_{\mathrm{mixer}}\rho_{\mathrm{nf}}P_{w,0}P_{x,0}/(k_B T_0 B)$. We calibrate the noise power in the baseband equivalent model by setting $\sigma_{0}^{2,(\ell)}=\bar S_{\mathrm{ref}}^{(\ell)}/\kappa^{(\ell)}$. Substitution gives the online reference metric of NMSE, i.e.,
\begin{equation}
  \mathrm{NMSE}_{k,\mathrm{ref}}^{(\ell)} = \frac{1}{\kappa^{(\ell)}\,|a_k^{(\ell)}|^2\,|\beta_k^{(\ell)}|^2},
  \label{eq:nmse_abeta}
\end{equation}
which satisfies that $\mathrm{NMSE}_{k}^{(\ell)}=(\bar S_{\mathrm{ref}}^{(\ell)}/\bar S_k^{(\ell)})\mathrm{NMSE}_{k,\mathrm{ref}}^{(\ell)}$. When solving the physical layer design problem online, the accuracy constraint is imposed on \eqref{eq:nmse_abeta}, while $\mathrm{NMSE}_{k}^{(\ell)}$ is used only to evaluate the performance in simulations.


\subsection{Inference Energy Consumption Metric}\label{sec:energy_model}
We next derive the inference energy consumption metric of analog RF computing-based edge AI, which mainly includes BS-side transmission and client-side waveform generation and readout. Since the diode-ring mixer is a passive device, it does not contribute directly to the energy consumption. As analog RF computing is realized by OFDM-style blocks with FFT-based readout, the client-side energy per real-valued multiply-accumulate (MAC), in the unit of J/MAC, can be decomposed as three parts, i.e., $e_{\mathrm{client},k}^{(\ell)}=e_{1,k}^{(\ell)}+e_2^{(\ell)}+e_3^{(\ell)}$, where $e_{1,k}^{(\ell)}$ is waveform generation energy per MAC, $e_2^{(\ell)}$ is ADC sampling energy per MAC, and $e_3^{(\ell)}$ is digital decoding energy per MAC. The first term $e_{1,k}^{(\ell)}$ changes with the scaling coefficient $\beta_k^{(\ell)}$, while the last two terms are constant once the receiver and waveform design are fixed.

The term $e_{1,k}^{(\ell)}$ scales linearly with the RF-port input power $P_{\mathrm{RF},k}^{(\ell)}=P_{x,0}|\beta_k^{(\ell)}|^2$ and inversely with the RF chain efficiency $\rho_{\mathrm{radio}}$, and is given by
\begin{equation}
  e_{1,k}^{(\ell)} = \frac{(1+\vartheta^{(\ell)})(1+\varpi^{(\ell)})}{4B}\,\rho_{\mathrm{radio}}^{-1}\,P_{x,0}|\beta_k^{(\ell)}|^2.
  \label{eq:e1_beta}
\end{equation}
The same input waveform is reused over the $Q^{(\ell)}$ row blocks, so this reuse factor cancels in the per-MAC quantity in \eqref{eq:e1_beta}, except for the small zero-padding overhead of the last block. For the receiver, let $e_{\mathrm{adc}}$ be the energy per ADC sample and $e_{\mathrm{dig}}$ be the digital decoding energy per real MAC. Because layer $\ell$ is read out over $Q^{(\ell)}$ blocks, each with $\widetilde{M}^{(\ell)}=(1+\vartheta^{(\ell)})M^{\prime(\ell)}$ IF-port subcarriers, the energy terms per MAC are $e_2^{(\ell)}=\frac{1+\vartheta^{(\ell)}}{2N^{(\ell)}}e_{\mathrm{adc}}$ and $e_3^{(\ell)}=\frac{1+\vartheta^{(\ell)}}{2N^{(\ell)}}\log_2((1+\vartheta^{(\ell)})M^{\prime(\ell)})e_{\mathrm{dig}}$, respectively~\cite{wise}. 

For layer $\ell$, one inference involves $P^{(\ell)}$ input vectors\footnote{For a fully connected layer, the input is one vector and $P^{(\ell)}=1$. For a convolutional layer, the same convolution kernels are applied at many spatial windows of the input feature map. Image to column operation unfolds each window into one input vector and multiplies the same weight matrix by all unfolded vectors. Thus, $P^{(\ell)}$ equals the number of convolution windows \cite{sze2017efficient}.}. The client energy and BS energy, in the unit of J, in layer $\ell$ are
\begin{equation}
\begin{aligned}
  & E_{\mathrm{client},k}^{(\ell)} = M^{(\ell)} N^{(\ell)} P^{(\ell)} e_{\mathrm{client},k}^{(\ell)}, \\
  & E_{\mathrm{BS}}^{(\ell)} = \frac{T_w^{(\ell)}}{\eta_{\mathrm{BS}}}\,P_{w,0}\|\mathbf{f}^{(\ell)}\|_2^2,
\end{aligned}
  \label{eq:layer_energies}
\end{equation}
where $T_w^{(\ell)}$ is the transmission time of the layer-$\ell$ weight waveform and $\eta_{\mathrm{BS}}\in(0,1]$ is the BS efficiency. The total energy per inference is $E_{\mathrm{tot}}=\sum_{\ell=1}^{L} E_{\mathrm{BS}}^{(\ell)} + \sum_{k=1}^{K}\sum_{\ell=1}^{L} E_{\mathrm{client},k}^{(\ell)}$. 

The client-side and BS-side inference energy consumption per MAC, in the unit of J/MAC, are given by
\begin{equation}
\begin{aligned}
  & \bar e_{\mathrm{client}} \triangleq
  \frac{\sum_{k=1}^{K}\sum_{\ell=1}^{L}E_{\mathrm{client},k}^{(\ell)}} {K\sum_{\ell=1}^{L} M^{(\ell)}N^{(\ell)}P^{(\ell)}}, \\
  & \bar e_{\mathrm{BS}} \triangleq
  \frac{\sum_{\ell=1}^{L} E_{\mathrm{BS}}^{(\ell)}}
  {K\sum_{\ell=1}^{L} M^{(\ell)}N^{(\ell)}P^{(\ell)}},
\end{aligned}
\end{equation}
where ${K\sum_{\ell=1}^{L} M^{(\ell)}N^{(\ell)}P^{(\ell)}}$ is the total MACs at all clients. 

\subsection{Accuracy-Constrained Energy Minimization Problem}\label{sec:accuracy_design}
We formulate the physical layer design problem to minimize the inference energy consumption of analog RF computing-based edge inference while satisfying the computing accuracy constraints, the transmit power constraints, and the hardware constraints. We parameterize the computing accuracy targets by a root NMSE target $\epsilon_k^{(\ell)}$ and impose $\mathrm{NMSE}_{k,\mathrm{ref}}^{(\ell)} \le (\epsilon_k^{(\ell)})^2$, which is equivalent to a lower bound on the end-to-end analog multiplication gain, i.e., 
\begin{equation}
  |a_k^{(\ell)}\beta_k^{(\ell)}| \ge u_k^{(\ell)},\quad
  u_k^{(\ell)} \triangleq \sqrt{\frac{1}{\kappa^{(\ell)}(\epsilon_k^{(\ell)})^2}}, \quad \forall k \in \mathcal{K}, \forall \ell \in \mathcal{L}.
  \label{eq:u_req_def}
\end{equation}
The designed algorithm should support the heterogeneous $\epsilon_k^{(\ell)}$ for different layers $\ell$ and clients $k$. The beamforming coefficient is $a_k^{(\ell)}=g_k^{(\ell)}=\mathbf{h}_k^{\mathsf{H}}\mathbf{f}^{(\ell)}$ when the small-signal operating region constraint of the mixer is enforced, i.e., $|g_k^{(\ell)}|\le a_{\max}$, which depends on the hardware characteristics of the passive mixer. In addition, the transmit power should not exceed the limit at both the BS and the client sides, leading to constraints $P_{w,0}\|\mathbf{f}^{(\ell)}\|_2^2 \le P_{w,\max}$ and $|\beta_k^{(\ell)}|\le\beta_{\max}\triangleq\sqrt{P_{x,\max}/P_{x,0}}$.

The physical layer variables are the BS beamformer $\mathbf{f}^{(\ell)}$ and the client scaling coefficients $\{\beta_k^{(\ell)}\}_{k=1}^K$. Among the client-side energy terms, only the waveform-generation term $e_{1,k}^{(\ell)}$ depends on these variables. The ADC sampling and digital decoding terms $e_2^{(\ell)}$ and $e_3^{(\ell)}$ are added back when reporting the energy metrics above, but they do not change the optimizer. To expose the tradeoff between BS-side and client-side energy, we introduce a weighting parameter $\lambda\in[0,1]$. We define the client-side waveform generation coefficient as
\begin{equation}
  c_k^{(\ell)} \triangleq
  \frac{(1+\vartheta^{(\ell)})(1+\varpi^{(\ell)})}{4B}\,\rho_{\mathrm{radio}}^{-1}
  M^{(\ell)} N^{(\ell)} P^{(\ell)} P_{x,0}.
  \label{eq:ckl_def}
\end{equation}

Each layer of the NN weights is transmitted by using its own physical layer parameters $\mathbf{f}^{(\ell)}$ and $\{\beta_k^{(\ell)}\}_{k=1}^K$. No objective value or constraint couples two different layers, and the multi-layer design is decomposed into $L$ independent subproblems. To state the generic per-layer problem compactly, we consider an arbitrary layer $\ell$, drop the layer superscript $(\ell)$, and define $a_k\triangleq\mathbf{h}_k^{\mathsf{H}}\mathbf{f}$. The objective function is a weighted sum of the BS-side and client-side energy consumption, i.e., $\lambda E_{\mathrm{BS}}^{(\ell)} + (1-\lambda) \sum_{k=1}^K E_{\mathrm{client},k}^{(\ell)}$. After expanding the objective function, the per-layer physical layer design problem is as follows:
\begin{align}
  \text{P1}~~
  \underset{\mathbf{f},\{\beta_k\}_{k=1}^K}{\text{minimize}}~~ &
  \lambda\,\frac{T_w}{\eta_{\mathrm{BS}}}\,P_{w,0}\|\mathbf{f}\|_2^2
  + (1-\lambda)\sum_{k=1}^{K} c_k\,|\beta_k|^2 \notag\\
  \text{subject to}~~
  & \text{C1: } |a_k\beta_k| \ge u_k, ~~\forall k \in \mathcal{K}, \notag\\
  & \text{C2: } P_{w,0}\|\mathbf{f}\|_2^2 \le P_{w,\max}, \label{eq:Pl_layer}\\
  & \text{C3: } |\mathbf{h}_k^{\mathsf{H}}\mathbf{f}| \le a_{\max}, ~~\forall k \in \mathcal{K}, \notag\\
  & \text{C4: } |\beta_k| \le \beta_{\max}, ~~\forall k \in \mathcal{K} . \notag
\end{align}
The problem is non-convex since constraint C1 contains the product of the BS beamforming gain and the client scaling. In the following section, we develop an algorithm for this generic per-layer problem and apply the same procedure to every layer.

\section{Algorithm Design for Analog RF Computing}\label{sec:solver}
In this section, we develop an efficient algorithm for solving the per-layer problem P1 in \eqref{eq:Pl_layer}. The main difficulty lies in the non-convex coupling between the BS beamformer and the client scaling coefficients in constraint C1. To address this, we first derive the optimal client scaling in a closed form, then transform the problem to beamforming-only form, and finally solve the resulting problem by SCA techniques. By further utilizing the channel-subspace structure, the proposed algorithm operates in a low-dimensional subspace and remains scalable even when the BS has a massive number of antennas. 

\subsection{Designing Client-Side Scaling Coefficients}\label{sec:optimal_scaling}
For a fixed BS beamformer $\mathbf{f}$, the client-side scaling coefficients $\{\beta_k\}_{k=1}^K$ decouple across clients in both objective and constraints. We optimize each $\beta_k$ separately in a closed form.

\begin{lemma}[Energy-minimizing $\beta_k$ under C1--C4]\label{lem:beta_structure}
For a fixed beamformer $\mathbf{f}$, problem P1 at client $k$ can be reformulated as
\begin{equation*}
  \underset{\beta_k}{\text{minimize}}~|\beta_k|^2
  ~~\text{subject to}~~
  |a_k\beta_k| \ge u_k,\quad |\beta_k|\le\beta_{\max}. \label{eq:beta_subprob}
\end{equation*}
This problem is feasible if and only if $|a_k|\ge u_k/\beta_{\max}$.
When feasible, an energy-minimizing solution is
\begin{equation}
  \beta_k^\star = \frac{u_k}{a_k}. \label{eq:beta_opt_form}
\end{equation}
\end{lemma}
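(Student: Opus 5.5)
We first carry out the reduction. With $\mathbf{f}$ fixed, the BS term $\lambda\frac{T_w}{\eta_{\mathrm{BS}}}P_{w,0}\|\mathbf{f}\|_2^2$ in the objective of P1 is a constant, and constraints C2 and C3 involve only $\mathbf{f}$; they are therefore either satisfied or violated independently of $\{\beta_k\}$. What remains is $(1-\lambda)\sum_k c_k|\beta_k|^2$ subject to C1 and C4. Every summand depends on a single $\beta_k$, and C1 and C4 are per-client, so the problem separates across clients. Since $c_k>0$ by \eqref{eq:ckl_def}, each subproblem is equivalent to minimizing $|\beta_k|^2$ subject to $|a_k||\beta_k|\ge u_k$ and $|\beta_k|\le\beta_{\max}$. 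The degenerate case $\lambda=1$, where the client term vanishes, needs a remark: any feasible $\beta_k$ is then optimal, so the stated $\beta_k^\star$ is still \emph{an} energy-minimizing choice. This is consistent with the lemma's wording.

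Both the objective and the constraints depend on $\beta_k$ only through $r\triangleq|\beta_k|\ge 0$, so the subproblem is one-dimensional in $r$. Note that $u_k>0$ by \eqref{eq:u_req_def}. If $a_k=0$, C1 cannot be met, and this agrees with $|a_k|<u_k/\beta_{\max}$. If $a_k\neq 0$, C1 becomes $r\ge u_k/|a_k|$, so the feasible set is the interval $[u_k/|a_k|,\beta_{\max}]$. This interval is nonempty exactly when $u_k/|a_k|\le\beta_{\max}$, i.e., when $|a_k|\ge u_k/\beta_{\max}$, which proves the feasibility claim in both directions.

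When the problem is feasible, $r^2$ is increasing on $r\ge0$, so the minimum is attained at the left endpoint $r^\star=u_k/|a_k|$. The phase of $\beta_k$ does not affect either the objective or the constraints. Choosing $\beta_k^\star=u_k/a_k$ gives $|\beta_k^\star|=u_k/|a_k|$, so it attains the minimum. This choice also makes the end-to-end gain $a_k\beta_k^\star=u_k$ real and positive, which is convenient for the scalar equalization in Section~\ref{sec:Computing_Accuracy_Metric}. The argument has no real obstacle; the only care needed is in the bookkeeping for $a_k=0$ and for $\lambda\in\{0,1\}$, and in noting that the minimizer is unique only up to phase, which is why the statement says ``an'' energy-minimizing solution.
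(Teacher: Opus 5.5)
Your proof is correct and follows essentially the same route as the paper: both reduce the per-client subproblem to the modulus $|\beta_k|$, attain C1 with equality to get $|\beta_k^\star| = u_k/|a_k|$ (with the phase chosen to cancel that of $a_k$), and obtain feasibility from C4 as $u_k/|a_k| \le \beta_{\max}$. Your version is more careful. It spells out the decoupling from P1 and handles the edge cases $a_k = 0$ and $\lambda = 1$, which the paper's short argument leaves implicit.
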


\begin{proof}
We first express $a_k=|a_k|e^{j\phi}$. The constraint depends only on $|\beta_k|$ through $|a_k||\beta_k|$.
Hence, an energy-minimizing solution aligns $\beta_k$ to cancel the phase of $a_k$ and attains constraint C1 with equality
$\beta_k=\alpha e^{-j\phi}$, where $\alpha\triangleq u_k/|a_k|$. This choice is feasible under constraint C4 if and only if $\alpha\le\beta_{\max}$, i.e., $|a_k|\ge u_k/\beta_{\max}$.
\end{proof}

\subsection{Designing BS-Side Beamformer}\label{sec:beamforming_reform}
Substituting \eqref{eq:beta_opt_form} into the objective yields
\begin{equation}
  \sum_{k=1}^{K} c_k|\beta_k^\star|^2
  = \sum_{k=1}^{K}\frac{c_ku_k^2}{|a_k|^2}
  = \sum_{k=1}^{K}\frac{\chi_k}{|\mathbf{h}_k^{\mathsf{H}}\mathbf{f}|^2},
  \qquad \chi_k \triangleq c_ku_k^2. \label{eq:obj_var_layer}
\end{equation}
Moreover, combining constraints C1 and C4 gives an equivalent feasibility constraint on the BS-side beamformer $\mathbf{f}$, i.e., $\big|\mathbf{h}_k^{\mathsf{H}}\mathbf{f}\big| \ge u_k/\beta_{\max},\quad \forall k \in \mathcal{K},$
\[
  \big|\mathbf{h}_k^{\mathsf{H}}\mathbf{f}\big| \ge u_k/\beta_{\max},\quad \forall k \in \mathcal{K},
\]
which we denote by constraint $\overline{\text{C1}}$. Therefore, problem P1 in \eqref{eq:Pl_layer} reduces to the following beamforming problem:
\begin{equation}\label{eq:P2}
\begin{aligned}
  \text{P2}~~
  \underset{\mathbf{f}}{\text{minimize}}~~ &
  \lambda\,\frac{T_w}{\eta_{\mathrm{BS}}}\,P_{w,0}\|\mathbf{f}\|_2^2
  + (1-\lambda)\sum_{k=1}^{K} \frac{\chi_k}{|\mathbf{h}_k^{\mathsf{H}}\mathbf{f}|^2} \\
  \text{subject to}~~
  & \overline{\text{C1}}\mathpunct{:} |\mathbf{h}_k^{\mathsf{H}}\mathbf{f}|^2 \ge u_k^2/\beta_{\max}^2, ~~\forall k \in \mathcal{K}, \\
  & \text{C2: } P_{w,0}\|\mathbf{f}\|_2^2 \le P_{w,\max}, \\
  & \text{C3: } |\mathbf{h}_k^{\mathsf{H}}\mathbf{f}| \le a_{\max}, ~~\forall k \in \mathcal{K} .
\end{aligned}
\end{equation}
However, problem P2 in~\eqref{eq:P2} is still non-convex because of the reciprocal terms in the objective. We propose to utilize SCA techniques to tackle P2 in the following. 

\subsubsection{SCA Beamforming Update}\label{sec:sca_update}
Recall that $a_k \triangleq \mathbf{h}_k^{\mathsf{H}}\mathbf{f}$ in the generic per-layer problem, to make the variable $\mathbf{f}$ explicit, we further define $a_k(\mathbf{f})\triangleq \mathbf{h}_k^{\mathsf{H}}\mathbf{f}$.
The mapping $\mathbf{f}\mapsto |a_k(\mathbf{f})|^2$ is convex in $\mathbf{f}$. Hence, for any current iterate $\mathbf{f}^{(i)}$, the first-order Taylor expansion provides a global affine lower bound as
\begin{equation}\label{eq:uk_affine_lb}
\begin{aligned}
|a_k(\mathbf{f})|^2 \ge\;& |a_k(\mathbf{f}^{(i)})|^2 \\
&+ 2\Re\!\left\{a_k(\mathbf{f}^{(i)})^\ast\big(a_k(\mathbf{f})-a_k(\mathbf{f}^{(i)})\big)\right\} \\
\triangleq\;& \underline{u}_k^{(i)}(\mathbf{f}),
\end{aligned}
\end{equation}
where $a_k^{(i)}\triangleq a_k(\mathbf{f}^{(i)})$.
Since $\underline{u}_k^{(i)}(\mathbf{f}) \le |a_k(\mathbf{f})|^2$ for all $\mathbf{f}$ and $\underline{u}_k^{(i)}(\mathbf{f}^{(i)})=|a_k^{(i)}|^2$,
and because $x\mapsto 1/x$ is convex and decreasing on $x>0$, we obtain the SCA upper bound as
\begin{equation}
  \frac{\chi_k}{|a_k(\mathbf{f})|^2} \le \frac{\chi_k}{\underline{u}_k^{(i)}(\mathbf{f})}, \,\, \forall \mathbf{f} \quad \text{subject to } \underline{u}_k^{(i)}(\mathbf{f})>0, \label{eq:recip_majorize}
\end{equation}
with equality at $\mathbf{f}=\mathbf{f}^{(i)}$.
At iteration $i$, we solve the following convex surrogate of problem P2 in \eqref{eq:P2}: 
\begin{align}
  \text{P2-SCA}~~
  \underset{\mathbf{f}}{\text{minimize}}~~ &
  \lambda\,\frac{T_w}{\eta_{\mathrm{BS}}}\,P_{w,0}\|\mathbf{f}\|_2^2
  + (1-\lambda)\sum_{k=1}^{K} \frac{\chi_k}{\underline{u}_k^{(i)}(\mathbf{f})} \notag\\
  \text{subject to}~~
  & \widetilde{\overline{\text{C1}}}\mathpunct{:}~ \underline{u}_k^{(i)}(\mathbf{f}) \ge u_k^2/\beta_{\max}^2, ~~\forall k \in \mathcal{K}, \notag\\
  & \text{C2: } P_{w,0}\|\mathbf{f}\|_2^2 \le P_{w,\max}, \label{eq:P2_SCA}\\
  & \text{C3: } |\mathbf{h}_k^{\mathsf{H}}\mathbf{f}| \le a_{\max}, ~~\forall k \in \mathcal{K} . \notag
\end{align}
Constraint $\widetilde{\overline{\text{C1}}}$ has two roles. First, it makes every denominator $\underline{u}_k^{(i)}(\mathbf{f})$ in the surrogate objective positive. Second, since $\underline{u}_k^{(i)}(\mathbf{f})\le |a_k(\mathbf{f})|^2$, any point satisfying $\underline{u}_k^{(i)}(\mathbf{f})\ge u_k^2/\beta_{\max}^2$ also satisfies the original $\overline{\text{C1}}$ constraint $|a_k(\mathbf{f})|^2\ge u_k^2/\beta_{\max}^2$. 
Let $F(\mathbf{f})$ denote the objective of \eqref{eq:P2} and let $Q_i(\mathbf{f})$ denote the surrogate objective in \eqref{eq:P2_SCA}.
If $\mathbf{f}^{(i)}$ is feasible and the update $\mathbf{f}^{(i+1)}$ obtained from \eqref{eq:P2_SCA} satisfies $Q_i(\mathbf{f}^{(i+1)})\le Q_i(\mathbf{f}^{(i)})$, then
\begin{equation}
F(\mathbf{f}^{(i+1)}) \le Q_i(\mathbf{f}^{(i+1)}) \le Q_i(\mathbf{f}^{(i)})=F(\mathbf{f}^{(i)}).
\label{eq:sca_nonincrease}
\end{equation}
Thus, the objective of \eqref{eq:P2} is non-increasing in SCA iterations.

\subsubsection{Low-Dimensional Solution Structure}\label{sec:low_dim_subspace}
Both the BS-side beamformer design problem P2 in \eqref{eq:P2} and each SCA subproblem P2-SCA in \eqref{eq:P2_SCA} have an inherent low-dimensional solution structure as shown below. The optimizer can be restricted to the subspace spanned by the channels.

\begin{lemma}[Low-dimensional solution structure]\label{lem:channel_subspace}
Denote $\mathbf{H}\triangleq [\mathbf{h}_1,\dots,\mathbf{h}_K]\in\mathbb{C}^{N_\text{t}\times K}$ and define the channel subspace using $\mathcal{S}\triangleq \mathrm{col}(\mathbf{H})$.
If problem~\eqref{eq:P2} is feasible, its minimum value is attained, and at least one minimizer lies in $\mathcal{S}$.
Moreover, for any SCA iteration $i$, if the convex subproblem~\eqref{eq:P2_SCA} is feasible, it has a minimizer in $\mathcal{S}$.
\end{lemma}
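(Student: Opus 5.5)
The plan is to decompose any candidate beamformer orthogonally as $\mathbf{f}=\mathbf{f}_{\mathcal{S}}+\mathbf{f}_{\perp}$, with $\mathbf{f}_{\mathcal{S}}\in\mathcal{S}$ and $\mathbf{f}_{\perp}\in\mathcal{S}^{\perp}$. Since $\mathbf{h}_k^{\mathsf H}\mathbf{f}_{\perp}=0$ for every $k$, each $a_k(\mathbf{f})=\mathbf{h}_k^{\mathsf H}\mathbf{f}$ depends only on $\mathbf{f}_{\mathcal{S}}$. By Pythagoras, $\|\mathbf{f}\|_2^2=\|\mathbf{f}_{\mathcal{S}}\|_2^2+\|\mathbf{f}_{\perp}\|_2^2$.

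For P2, constraints $\overline{\text{C1}}$ and C3 and the reciprocal term of the objective depend only on $\{a_k\}$, so they are unchanged if $\mathbf{f}$ is replaced by its projection $\mathbf{f}_{\mathcal{S}}$. The power constraint C2 remains satisfied because the norm can only decrease. The first objective term $\lambda\frac{T_w}{\eta_{\mathrm{BS}}}P_{w,0}\|\mathbf{f}\|_2^2$ is also non-increasing. Hence $F(\mathbf{f}_{\mathcal{S}})\le F(\mathbf{f})$, and projecting any minimizer yields a minimizer in $\mathcal{S}$.

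The same argument applies verbatim to P2-SCA. The affine lower bound $\underline{u}_k^{(i)}(\mathbf{f})$ depends on $\mathbf{f}$ only through $a_k(\mathbf{f})$, where the iterate value $a_k^{(i)}$ is a constant. So $\widetilde{\overline{\text{C1}}}$, C3 and the surrogate reciprocal terms are invariant under projection, while C2 and the quadratic term are not increased. Therefore any minimizer of P2-SCA projects to a minimizer in $\mathcal{S}$.

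The main obstacle is attainment of the minimum. Here I would restrict attention to $\mathcal{S}$, where the problem reduces to the coefficient vector $\mathbf{z}$ in $\mathbf{f}=\mathbf{H}\mathbf{z}$, or an orthonormal basis of $\mathcal{S}$. The feasible set in $\mathcal{S}$ is closed, being defined by continuous non-strict inequalities, and bounded by C2, hence compact. Feasibility of P2 implies the projected feasible point is nonempty in $\mathcal{S}$. On this set $\overline{\text{C1}}$ forces $|a_k|^2\ge u_k^2/\beta_{\max}^2>0$, assuming $u_k>0$ and $\beta_{\max}<\infty$. The objective is therefore continuous there, and Weierstrass gives a minimizer in $\mathcal{S}$. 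By the projection argument this is also a global minimizer over $\mathbb{C}^{N_\text{t}}$, since the infimum over all $\mathbf{f}$ equals the infimum over $\mathcal{S}$.

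For P2-SCA the same compactness argument works: $\widetilde{\overline{\text{C1}}}$ keeps every denominator bounded below by $u_k^2/\beta_{\max}^2>0$, so the surrogate objective is continuous on the compact feasible set in $\mathcal{S}$. Convexity of the surrogate is not even needed for existence. A minor care point is the degenerate case $\lambda=1$ or $\lambda=0$, which the argument already covers since both terms are handled separately.
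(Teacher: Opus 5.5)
Your proposal is correct and follows essentially the paper's argument: the orthogonal decomposition $\mathbf{f}=\mathbf{f}_{\mathcal{S}}+\mathbf{f}_{\perp}$ shows that projecting onto $\mathcal{S}$ preserves feasibility and does not increase either objective, and a compactness/extreme-value argument applies because $\overline{\text{C1}}$ (resp.\ $\widetilde{\overline{\text{C1}}}$) bounds the denominators below by $u_k^2/\beta_{\max}^2>0$. The only cosmetic difference is that you apply Weierstrass on the feasible set restricted to $\mathcal{S}$ and then use the projection to equate the infima, whereas the paper applies it to the full feasible set in $\mathbb{C}^{N_\text{t}}$ and then projects the resulting minimizer.
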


\begin{proof}
When problem \eqref{eq:P2} is feasible, constraint C2 bounds the norm of $\mathbf{f}$, and all constraints are closed. Hence, the feasible set is nonempty and compact. Constraint $\overline{\text{C1}}$ also gives $|\mathbf{h}_k^{\mathsf H}\mathbf{f}|^2\ge u_k^2/\beta_{\max}^2>0$, so the reciprocal terms in the objective are finite and continuous on the feasible set. Therefore, the minimum is attained by extreme value theorem. The same compactness and continuity argument applies to any feasible SCA subproblem. 

Decompose an arbitrary $\mathbf{f}$ as $\mathbf{f}=\mathbf{f}_{\parallel}+\mathbf{f}_{\perp}$, where $\mathbf{f}_{\parallel}\in\mathcal{S}$ and $\mathbf{f}_{\perp}\perp\mathcal{S}$.
Since each $\mathbf{h}_k\in\mathcal{S}$, we have $\mathbf{h}_k^{\mathsf{H}}\mathbf{f}=\mathbf{h}_k^{\mathsf{H}}\mathbf{f}_{\parallel}$ for all $k$.
Hence, all terms depending on $\{\mathbf{h}_k^{\mathsf{H}}\mathbf{f}\}_{k=1}^K$ in the objective and constraints of \eqref{eq:P2} and \eqref{eq:P2_SCA} depend only on $\mathbf{f}_{\parallel}$.
On the other hand, $\|\mathbf{f}\|_2^2=\|\mathbf{f}_{\parallel}\|_2^2+\|\mathbf{f}_{\perp}\|_2^2\ge\|\mathbf{f}_{\parallel}\|_2^2$.
Therefore, replacing $\mathbf{f}$ by $\mathbf{f}_{\parallel}$ preserves feasibility and does not increase the objective value. Applying this projection to a minimizer of \eqref{eq:P2} yields a minimizer in $\mathcal{S}$, and applying it to a minimizer of \eqref{eq:P2_SCA} yields an SCA-subproblem minimizer in $\mathcal{S}$.
\end{proof}

Lemma~\ref{lem:channel_subspace} enables an exact dimension reduction.
Define $r\triangleq \mathrm{rank}(\mathbf{H})\le K$, and let $\mathbf{U}\in\mathbb{C}^{N_\text{t}\times r}$ have orthonormal columns spanning $\mathcal{S}$.
In implementation, we obtain $\mathbf{U}$ using QR factorization $\mathbf{H}=\mathbf{U}\mathbf{R}$, and then parameterize
\begin{equation}
  \mathbf{f}=\mathbf{U}\mathbf{b},\qquad \mathbf{b}\in\mathbb{C}^{r}. \label{eq:w_subspace_param}
\end{equation}
Define reduced-dimension channels $\tilde{\mathbf{h}}_k\triangleq \mathbf{U}^{\mathsf{H}}\mathbf{h}_k\in\mathbb{C}^{r}$, then
\begin{equation}
  \mathbf{h}_k^{\mathsf{H}}\mathbf{f}=\tilde{\mathbf{h}}_k^{\mathsf{H}}\mathbf{b},\qquad \|\mathbf{f}\|_2^2=\|\mathbf{b}\|_2^2. \label{eq:reduced_channels}
\end{equation}
Under this parameterization, the affine lower bound in \eqref{eq:uk_affine_lb} is
\begin{equation}
\underline{u}_k^{(i)}(\mathbf{b})
=2\Re\!\{a_k^{(i)\ast}\tilde{\mathbf{h}}_k^{\mathsf{H}}\mathbf{b}\}-|a_k^{(i)}|^2,
\qquad a_k^{(i)}=\tilde{\mathbf{h}}_k^{\mathsf{H}}\mathbf{b}^{(i)}.
\label{eq:reduced_affine_lb}
\end{equation}
We solve problem P2-SCA in \eqref{eq:P2_SCA} using $\mathbf{b}$ (dimension $r\le K$) instead of $\mathbf{f}$ (dimension $N_\text{t}$),
and recover $\mathbf{f}=\mathbf{U}\mathbf{b}$ afterwards.

\subsubsection{Beamformer Initialization}
The SCA algorithm requires a feasible starting point. We adopt a maximum ratio transmission (MRT)-like beamformer in the reduced-dimensional subspace as a simple initialization. Specifically, we first obtain a reference beamforming direction $\mathbf{b}_{\mathrm{ref}} \triangleq \sum_{k=1}^{K}\tilde{\mathbf{h}}_k$. We then scale it to satisfy the BS power constraint C2 via
\begin{equation}
  \mathbf{b}^{(0)} \leftarrow \sqrt{P_{w,\max}/P_{w,0}}\,\frac{\mathbf{b}_{\mathrm{ref}}}{\|\mathbf{b}_{\mathrm{ref}}\|_2}. \label{MRT-like}
\end{equation}
If the result violates the mixer operating region constraint C3, we scale it down once more such that $\max_k |\tilde{\mathbf{h}}_k^{\mathsf{H}}\mathbf{b}^{(0)}| \le a_{\max}$. Before running the SCA, we also check a simple feasibility condition $u_k \le a_{\max}\beta_{\max}$ for all $k$. If this condition fails, the per-layer problem is considered infeasible. 

\subsubsection{A Summary of the Per-Layer Algorithm}\label{sec:per_layer_algorithm}
We summarize the per-layer joint beamforming and scaling design algorithm in Algorithm~\ref{alg:layer_sca}.
The algorithm first converts the root NMSE targets into the minimum end-to-end gains $\{u_k\}_{k=1}^{K}$ required by the clients. It then moves the beamforming update into the channel subspace, where the optimization variable has dimension $r\le K$ instead of $N_\text{t}$. At each SCA iteration, the non-convex lower-gain constraint and reciprocal objective terms are handled through the affine lower bound in \eqref{eq:reduced_affine_lb}. Because this lower bound never exceeds the true squared gain and is exact at the current iterate, satisfying the SCA lower-gain constraint also satisfies the original lower-gain constraint, and the surrogate objective equals the original beamforming objective at the current iterate.
After the beamformer $\mathbf{f}$ is obtained, the client scaling coefficients $\{\beta_k\}_{k=1}^K$ are computed via \eqref{eq:beta_opt_form}.
The same algorithm is then applied to each layer $\ell$ with the corresponding parameters.

\begin{algorithm}[t]
\caption{Per-layer joint beamforming and scaling design}\label{alg:layer_sca}
\begin{algorithmic}[1]
\STATE{\textbf{Input:}} Wireless channels $\{\mathbf{h}_k\}_{k=1}^K$, root NMSE targets $\{\epsilon_k\}_{k=1}^K$, system parameters $\kappa$, $\{c_k\}_{k=1}^K$ and $T_w$, power limits $P_{w,\max}$, $a_{\max}$ and $\beta_{\max}$, weighting parameter $\lambda\in[0,1]$, and maximum number of iterations $I_{\max}$.
\STATE{\textbf{Output:}} Beamformer $\mathbf{f}$, client scalings $\{\beta_k\}_{k=1}^{K}$.
\STATE Compute gains $u_k \leftarrow \sqrt{1/(\kappa \epsilon_k^2)}$ and set $\chi_k\leftarrow c_k u_k^2$.
\STATE Form $\mathbf{H}\leftarrow [\mathbf{h}_1,\dots,\mathbf{h}_K]$, compute QR factorization $\mathbf{H}=\mathbf{U}\mathbf{R}$ to obtain orthonormal basis $\mathbf{U}\in\mathbb{C}^{N_\text{t}\times r}$ for $\mathrm{col}(\mathbf{H})$, and set reduced-dimensional channels as $\tilde{\mathbf{h}}_k\leftarrow \mathbf{U}^{\mathsf{H}}\mathbf{h}_k$.
\STATE Initialize $\mathbf{b}^{(0)}$ using the MRT-like beamformer in \eqref{MRT-like}. 
\FOR{$i=0$ to $I_{\max}-1$}
  \STATE Set $a_k^{(i)} \leftarrow \tilde{\mathbf{h}}_k^{\mathsf{H}}\mathbf{b}^{(i)}$ and form $\underline{u}_k^{(i)}(\mathbf{b})$ as \eqref{eq:reduced_affine_lb}, $\forall k \in \mathcal{K}$.
  \STATE Solve the reduced-dimensional SCA problem in \eqref{eq:P2_SCA} by replacing every occurrence of $\mathbf{h}_k^{\mathsf{H}}\mathbf{f}$ with $\tilde{\mathbf{h}}_k^{\mathsf{H}}\mathbf{b}$ and $\|\mathbf{f}\|_2^2$ with $\|\mathbf{b}\|_2^2$, to obtain $\mathbf{b}^{(i+1)}$.
\ENDFOR
\STATE Obtain the BS beamformer $\mathbf{f}\leftarrow \mathbf{U}\mathbf{b}^{(I_{\max})}$.
\STATE Obtain client scaling coefficients $\beta_k \leftarrow u_k/(\tilde{\mathbf{h}}_k^{\mathsf{H}}\mathbf{b}^{(I_{\max})})$.
\end{algorithmic}
\end{algorithm}

\subsection{Complexity and Convergence}\label{sec:complexity_convergence}
As we apply the proposed per-layer algorithm to all $L$ layers, the overall complexity is $L$ times the per-layer complexity.
For a given layer, computing the required gains $\{u_k\}_{k=1}^{K}$ via $u_k=\sqrt{1/(\kappa \epsilon_k^2)}$ costs $\mathcal{O}(K)$ scalar operations. Computing $\mathbf{U}$ via a QR factorization of $\mathbf{H}\in\mathbb{C}^{N_\text{t}\times K}$ costs $\mathcal{O}(N_\text{t}K^2)$ when $N_\text{t}\gg K$.
After this reduction, each SCA iteration solves a convex problem in only $r$ complex variables, i.e., $2r$ real variables. With $K$ second-order cone constraints in C3 and $K$ affine constraints in $\widetilde{\overline{\text{C1}}}$, a generic interior-point method scales as approximately $\mathcal{O}\!\big((2r+K)^3\big)$ per iteration. Since $r\le K$, the per-iteration cost is effectively $\mathcal{O}(K^3)$ and is independent of the number of antennas $N_\text{t}$. With $I_{\max}$ SCA iterations per layer, the complexity is $\mathcal{O}\!\Big(L\big(N_\text{t}K^2 + K + I_{\max}(2r+K)^3\big)\Big)$. If the BS employs a large-scale antenna array, i.e., $r\le K\ll N_\text{t}$, this simplifies to $\mathcal{O}\!\Big(L\big(N_\text{t}K^2 + I_{\max}K^3\big)\Big)$. 

The non-increasing relationship in \eqref{eq:sca_nonincrease} suggests that each SCA step does not increase $F(\mathbf{f}^{(i)})$. If the initialization is feasible and every SCA update satisfies $Q_i(\mathbf{f}^{(i+1)})\le Q_i(\mathbf{f}^{(i)})$, then $\{F(\mathbf{f}^{(i)})\}$ is non-increasing. Since $F(\mathbf{f})$ is bounded below on the feasible set, this sequence converges to a finite limit. Under standard SCA regularity conditions, every limit point of $\{\mathbf{f}^{(i)}\}$ is a stationary point of \eqref{eq:P2}~\cite{sun2017majorization}. Finally, since $\{\beta_k\}_{k=1}^K$ are recovered by the optimal solution in Lemma~\ref{lem:beta_structure}, the pair $(\mathbf{f},\{\beta_k\}_{k=1}^K)$ is a stationary solution of problem P1 in \eqref{eq:Pl_layer}.

\section{Uniform- and Mixed-Precision Inference}
\label{sec:uniform_mixed_precision}

The aforementioned Algorithm~\ref{alg:layer_sca} can solve problem P1 in \eqref{eq:Pl_layer} to obtain the corresponding physical layer design for any prescribed root NMSE targets $\{\epsilon_k^{(\ell)}\}_{k,\ell}$. This means analog RF computing can support not only uniform-precision but also mixed-precision\footnote{In this paper, \textit{accuracy} and \textit{precision} are used interchangeably and carry the same meaning. We use precision here because \textit{mixed precision inference} is a standard terminology in AI inference on digital computing platforms.} NN inference.

Uniform-precision inference is the default setting in most NN deployments. For example, when an NN is executed on a GPU, all layers are often computed using the same numerical precision, e.g., 16 or 32 bit. However, this can be improved because different NN layers often contribute differently to the final inference performance. This motivates mixed-precision inference, in which the most important layers are assigned tighter computing accuracy targets, while the less sensitive layers are assigned looser ones. In this way, better AI inference performance can be achieved with lower energy consumption. This has received attention for edge AI inference on digital computing platforms following~\cite{wang2019haq}.

Mixed-precision inference is also promising for AI inference based on analog computing, including the proposed analog RF computing. Since mixed-precision inference has been widely studied for digital computing, the most direct way to use this idea here is to inherit a digital mixed-precision allocation, and map the bit-widths for each NN layer to layerwise analog root NMSE targets through the equivalent accuracy model in~\cite{garg2021dynamic}, and then solve problem P1 layer by layer using Algorithm~\ref{alg:layer_sca} to realize the mixed-precision root NMSE targets through physical layer design. In most cases, this works well as Algorithm~\ref{alg:layer_sca} supports any accuracy targets $\{\epsilon_k^{(\ell)}\}$.

However, directly using digital mixed-precision allocations also has a limitation. Digital hardware supports only a \textit{discrete} and \textit{coarse} set of integer bit-widths, but analog RF computing supports a \textit{continuous} and \textit{fine-grained} accuracy control via the end-to-end gain $|a_k^{(\ell)}\beta_k^{(\ell)}|$, as can be seen in \eqref{eq:nmse_abeta}. This motivates the following question. Given a candidate layerwise precision allocation, are there any better root NMSE targets for analog computing that achieve a lower inference loss under the same or lower energy consumption? A general problem is
\begin{equation}
\begin{aligned}
\underset{\{\epsilon_k^{(\ell)}\}_{k,\ell}}{\text{minimize}}~~
& \mathcal{L}_{\mathrm{loss}}\big(\{\epsilon_k^{(\ell)}\}_{k,\ell}\big) \\
\text{subject~to}~~
& \sum_{\ell=1}^{L} E_{\lambda,\ell}^{\star}
\big(\{\epsilon_k^{(\ell)}\}_{k=1}^{K}\big) \le \Gamma_{\lambda},
\end{aligned}
\label{eq:mixed_precision_general}
\end{equation}
where $\mathcal{L}_{\mathrm{loss}}(\cdot)$ denotes the NN loss after injecting analog computing noise according to the prescribed root NMSE targets to each layer, $\Gamma_{\lambda}$ denotes the weighted energy budget associated with the same BS-client weighting factor $\lambda$ as in problem P1, and $E_{\lambda,\ell}^{\star}(\{\epsilon_k^{(\ell)}\}_{k=1}^{K})$ denotes the minimum weighted physical-layer energy of layer $\ell$, obtained by solving problem P1 under the root NMSE targets. Given a reference target profile $\{\bar{\epsilon}_k^{(\ell)}\}_{k,\ell}$, such as one inherited from digital mixed-precision inference, one may set $\Gamma_{\lambda}=\sum_{\ell=1}^{L}E_{\lambda,\ell}^{\star}(\{\bar{\epsilon}_k^{(\ell)}\}_{k=1}^{K})$. Then, problem \eqref{eq:mixed_precision_general} optimizes for another target profile that yields a lower NN loss while requiring no more weighted energy consumption compared to the reference profile. Problem~\eqref{eq:mixed_precision_general} is conceptually simple, however, it is a bilevel non-convex optimization problem that is expensive to solve directly. Each evaluation of the energy constraint requires solving the non-convex per-layer problem P1 in Section~\ref{sec:solver}, while the loss function couples all layers through the end-to-end NN inference process. Solving the problem would repeatedly invoke Algorithm~\ref{alg:layer_sca} across many candidate target profiles and channel realizations, which is in general computationally prohibitive.

Fortunately, the problem formulated in \eqref{eq:mixed_precision_general} will become much simpler in some special cases of practical interest. For edge AI inference, we mainly focus on the energy consumption at the clients because edge devices are usually resource-constrained. This corresponds to the case where the weighting parameter $\lambda \approx 0$. To reveal this structure, we set $\lambda=0$ and temporarily ignore the BS power constraint C2 in problem P2. As only the optimal objective value $E_{\lambda,\ell}^{\star}
\big(\{\epsilon_k^{(\ell)}\}_{k=1}^{K}\big)$ is required in the constraint of \eqref{eq:mixed_precision_general}, we can replace the beamformer $\mathbf{f}$ by the effective gains $a_k=\mathbf{h}_k^{\mathsf H}\mathbf{f}$ in this simplified value calculation. The resulting problem decouples across clients and becomes
\begin{equation}
\begin{aligned}
\underset{\{|a_k|^2\}}{\text{minimize}}~~
& \sum_{k=1}^{K}\frac{c_k u_k^2}{|a_k|^2} \\
\text{subject~to}~~
& \frac{u_k^2}{\beta_{\max}^2} \le |a_k|^2 \le a_{\max}^2, ~~\forall k \in \mathcal{K}.
\end{aligned}
\label{eq:dp_lambda0_special}
\end{equation}
Because each term in the objective is decreasing in $|a_k|^2$, the optimal solution is $|a_k|^2=a_{\max}^2,\forall k \in \mathcal{K}$, and the optimal objective value is given by
\begin{equation}
E_{0}^{\star}(\{u_k\}_{k=1}^{K})
=
\frac{1}{a_{\max}^2}\sum_{k=1}^{K} c_k u_k^2
=
\frac{1}{a_{\max}^2\kappa}\sum_{k=1}^{K}\frac{c_k}{\epsilon_k^2},
\label{eq:dp_lambda0_exact_value}
\end{equation}
which reveals the inverse-square dependence of the client-side energy on the target root NMSE.

In the layerwise mixed-precision inference considered here, we further impose $\epsilon_k^{(\ell)}=\epsilon^{(\ell)},\forall k \in \mathcal{K}$. This follows the common layerwise mixed-precision convention, where all clients computing the same layer of a NN use the same accuracy target, while different layers can use different targets. Then \eqref{eq:dp_lambda0_exact_value} leads to the compact problem
\begin{equation}
\begin{aligned}
\underset{\{\epsilon^{(\ell)}\}_{\ell=1}^{L}}{\text{minimize}}~~
& \mathcal{L}_{\mathrm{loss}}\big(\{\epsilon^{(\ell)}\}_{\ell=1}^L\big) \\
\text{subject~to}~~
& \sum_{\ell=1}^{L} \frac{\omega_{\ell}}{(\epsilon^{(\ell)})^2} \le \Gamma_{0},
\end{aligned}
\label{eq:dp_lambda0_layerwise}
\end{equation}
where $\Gamma_{0}$ denotes the energy budget for $\lambda=0$, and $\omega_{\ell}
\triangleq
\frac{1}{a_{\max}^2\kappa^{(\ell)}}
\sum_{k=1}^{K} c_k^{(\ell)}$. Problem \eqref{eq:dp_lambda0_layerwise} is a specialization of problem \eqref{eq:mixed_precision_general} in which the client-side energy value can be written explicitly as inversely proportional to the NMSE targets. As the constraint is greatly simplified, problem \eqref{eq:dp_lambda0_layerwise} can be solved by stochastic gradient descent. Specifically, we define $\gamma^{(\ell)} \triangleq \frac{1}{(\epsilon^{(\ell)})^2},\,\forall \ell \in \mathcal{L}$ such that the budget constraint becomes $\sum_{\ell=1}^{L} \omega_{\ell}\gamma^{(\ell)} \le \Gamma_0$. We then introduce a set of unconstrained variables $\{z^{(\ell)}\}_{\ell=1}^{L}$ and map them to budget shares through a softmax function, i.e., 
\begin{equation}
\pi^{(\ell)}
=
\frac{\exp(z^{(\ell)})}{\sum_{j=1}^{L}\exp(z^{(j)})},
\qquad
\sum_{\ell=1}^{L}\pi^{(\ell)}=1,
\label{eq:softmax_mixed}
\end{equation}
which leads to the parameterization
\begin{equation}
\omega_{\ell}\gamma^{(\ell)}
=
\omega_{\ell}\gamma_{\min}
+
\Big(\Gamma_0-\sum_{j=1}^{L}\omega_j\gamma_{\min}\Big)\pi^{(\ell)},
\qquad \forall \ell \in \mathcal{L},
\label{eq:budget_param_mixed}
\end{equation}
in which $\gamma_{\min}>0$ is a small precision floor. In \eqref{eq:budget_param_mixed}, the term $\omega_{\ell}\gamma_{\min}$ reserves for a minimum precision target for layer $\ell$, while $\pi^{(\ell)}$ allocates the remaining budget $\Gamma_0-\sum_{j=1}^{L}\omega_j\gamma_{\min}$. Summing \eqref{eq:budget_param_mixed} over $\ell$ gives $\sum_{\ell=1}^{L}\omega_{\ell}\gamma^{(\ell)}=\Gamma_0$, so the budget constraint is enforced by construction. The resulting $\epsilon^{(\ell)}=(\gamma^{(\ell)})^{-1/2}$ are used to inject layerwise Gaussian noise in end-to-end NN inference and compute a mini-batch estimate of $\mathcal{L}_{\mathrm{loss}}$. The NN weights are frozen and set as gradient detached in the process, and only variables $\{z^{(\ell)}\}_{\ell=1}^{L}$ are updated by stochastic gradient descent with the Adam optimizer \cite{kingma2015adam}, i.e., 
\begin{equation}
z^{(\ell)} \leftarrow z^{(\ell)} - \eta\,\frac{\partial \mathcal{L}_{\mathrm{loss}}}{\partial z^{(\ell)}},
\qquad \forall \ell \in \mathcal{L},
\label{eq:sgd_mixed}
\end{equation}
where $\eta$ is the step size. 

To summarize, uniform-precision inference uses a predetermined root NMSE target $\epsilon_{k}^{(\ell)}=\epsilon, \forall {k} \in \mathcal{K},\forall \ell \in \mathcal{L}$ shared by all clients and all layers. For mixed-precision inference, one can either directly use the analog equivalent of a digital mixed-precision bit-width allocation \cite{garg2021dynamic}, or, in the practically relevant client-energy-focused scenario with $\lambda\approx0$, solve problem \eqref{eq:dp_lambda0_layerwise} by using the stochastic gradient algorithm to obtain layerwise root NMSE targets. Once the root NMSE targets are determined, Algorithm~\ref{alg:layer_sca} is applied to problem P1 for each layer to obtain the physical layer design parameters $\mathbf{f}^{(\ell)}$ and $\{\beta_k^{(\ell)}\}_{k=1}^{K}$.

\section{Performance Evaluation}\label{sec:sim_results}
In this section, we evaluate the proposed physical layer design for analog RF computing-based edge AI inference over the wireless downlink. The simulations jointly account for the NN architecture, the mixer-based energy model, the waveform schedule, and wireless channels generated according to the 3GPP specifications. We first describe the NN, hardware parameters, and channel generation procedure. We then evaluate the convergence and complexity of Algorithm~\ref{alg:layer_sca}, the energy-accuracy behavior, the BS-client energy tradeoff, and the benefits of mixed-precision analog AI inference. 

\begin{table}[t]
\caption{Structure of the CNN inspired by LeNet-5.}
\label{tab:lenet_architecture}
\centering
\scriptsize
\renewcommand{\arraystretch}{0.95}
\setlength{\tabcolsep}{5.8pt}
\begin{tabular}{l p{0.42\columnwidth} c c c c}
\toprule
$\ell$ & Operation & $M^{(\ell)}$ & $N^{(\ell)}$ & $P^{(\ell)}$ & MACs \\
\midrule
1 & $5\times5$ conv., 6 output maps, $28\times28$ positions (followed by pooling) & 6 & 25 & 784 & 117.6k \\
2 & $5\times5$ conv., 16 output maps, $10\times10$ positions (followed by pooling) & 16 & 150 & 100 & 240.0k \\
3 & fully connected, $400\rightarrow120$ & 120 & 400 & 1 & 48.0k \\
4 & fully connected, $120\rightarrow84$ & 84 & 120 & 1 & 10.1k \\
5 & fully connected, $84\rightarrow10$ & 10 & 84 & 1 & 0.84k \\
\bottomrule
\end{tabular}
\end{table}

\subsection{Simulation Settings}\label{sec:simulation_settings}
We use the MNIST handwritten-digit dataset and a convolutional neural network (CNN) inspired by LeNet-5~\cite{lecun1998gradient}. The input images are zero-padded from $28\times28$ to $32\times32$. The CNN has two convolution layers (each followed by pooling layers), and three fully connected layers. Since the pooling and nonlinear activation operations have much lower arithmetic cost than the linear layers, we apply analog RF computing to the five linear layers listed in Table~\ref{tab:lenet_architecture}. Each convolution layer is converted into an MVM by standard unfolding. The corresponding $(M^{(\ell)},N^{(\ell)},P^{(\ell)})$ in layer $\ell$ and the number of MAC operations are also reported in Table~\ref{tab:lenet_architecture}. Unless otherwise stated, the main inference experiments use $N_\text{t}=256$ BS antennas, $K=10$ clients, $3000$ randomly selected MNIST test images for accuracy evaluation, and another $1000$ images as the calibration dataset to estimate $\{\bar S_{\mathrm{ref}}^{(\ell)}\}$ and the mixed-precision NMSE targets in Section \ref{sec:uniform_mixed_precision}.

The distance-dependent path loss, shadow fading, and line-of-sight (LoS) probability follow the 3GPP TR~38.901 Indoor Factory with Sparse clutter and High base station height (InF-SH) model~\cite{3gpp38901}. The BS uses a half-wavelength-spaced uniform planar array (UPA). For the default value $N_\text{t}=256$, the array size is $16\times16$. For client $k$, let $d_{2\mathrm{D},k}$ be the horizontal BS-client distance and $d_{3\mathrm{D},k}=\sqrt{d_{2\mathrm{D},k}^{2}+(h_{\mathrm{BS}}-h_{\mathrm{client}})^2}$ denote the three-dimensional distance. Unless otherwise stated, $d_{2\mathrm{D},k}$ is drawn uniformly from $[10,15]$~m. The horizontal azimuth angle $\phi_k$ is drawn uniformly from $[0,2\pi)$. The elevation angle is set by geometry as $\theta_k=\arctan((h_{\mathrm{client}}-h_{\mathrm{BS}})/d_{2\mathrm{D},k})$. The LoS probability is $p_{\mathrm{LoS},k}=\exp(-d_{2\mathrm{D},k}/k_{\text{InF-SH}})$. A Bernoulli trial with probability $p_{\mathrm{LoS},k}$ determines the LoS or non-LoS (NLoS) state of client $k$. For the InF-SH sparse-clutter setting used here, $k_{\text{InF-SH}}=-\frac{d_{\mathrm{clutter}}}{\ln(1-r)}\frac{h_{\mathrm{BS}}-h_{\mathrm{client}}}{h_c-h_{\mathrm{client}}}=582.6$~m, obtained by using $d_{\mathrm{clutter}}=10$~m, $r=0.2$, $h_c=2$~m, $h_{\mathrm{BS}}=8$~m, and $h_{\mathrm{client}}=1.5$~m in the 3GPP InF-SH LoS-probability formula.

The deterministic path losses in dB are
\begin{equation*}
\begin{aligned}
\mathrm{PL}_{\mathrm{LoS},k}
&=31.84+21.5\log_{10}(d_{3\mathrm{D},k})
  +19\log_{10}(f_w),\\
\mathrm{PL}_{\mathrm{NLoS},k}
&=\max\{\mathrm{PL}_{\mathrm{LoS},k},
  32.4+23\log_{10}(d_{3\mathrm{D},k}) \\
&\qquad
  +20\log_{10}(f_w)\},
\end{aligned}
\end{equation*}
for the LoS and NLoS links, respectively, where $d_{3\mathrm D,k}$ is in meters and $f_w$ is in GHz. The coefficients $21.5$ and $23$ correspond to path-loss exponents $2.15$ and $2.3$, respectively. Let $\mathrm{PL}_{0,k}$ denote either $\mathrm{PL}_{\mathrm{LoS},k}$ or $\mathrm{PL}_{\mathrm{NLoS},k}$ according to the drawn LoS or NLoS state. The overall path loss is $\mathrm{PL}_{k}=\mathrm{PL}_{0,k}+\mathrm{SF}_{k}$, where $\mathrm{SF}_{k}$ is zero-mean log-normal shadow fading in dB, with standard deviation $4$~dB for LoS links and $5.9$~dB for NLoS links. The large-scale channel power gain after antenna gains is $\zeta_k=10^{-\left(\mathrm{PL}_k-G_{\mathrm{BS}}-G_{\mathrm{client}}\right)/10}$, and the channel vector is $\mathbf h_k=\sqrt{\zeta_k}\,\widetilde{\mathbf h}_k$.

\begin{table}[t]
\caption{List of key simulation parameters.}
\label{tab:constants}
\centering
\scriptsize
\renewcommand{\arraystretch}{0.95}
\setlength{\tabcolsep}{2.4pt}
\begin{tabular}{l p{0.52\columnwidth} l}
\toprule
Symbol & Description & Value \\
\midrule
$B$ & Occupied bandwidth & $25$~MHz \\
$f_w$ & Downlink carrier frequency & $2.5$~GHz \\
$d_{2\mathrm{D}}$ & Horizontal BS-client distance & uniform in $[10,15]$~m \\
$h_{\mathrm{BS}},h_{\mathrm{client}}$ & BS and client heights & $8$~m, $1.5$~m \\
$G_{\mathrm{BS}},G_{\mathrm{client}}$ & BS and client antenna gains & $8$~dBi, $3$~dBi \\
$K_{\mathrm R}$ & Rician factor & $9$~dB \\
$T_0$ & Environmental temperature & $300$~K \\
$k_BT_0$ & Thermal-noise power spectral density & $-174$~dBm/Hz \\
$\vartheta^{(\ell)}$ & Guard factor in layer $\ell$ & $0.33$ \\
$M^{\prime(\ell)}$ & Size of row blocks in layer $\ell$ & $6$ \\
$\varpi^{(\ell)}$ & CP overhead in layer $\ell$ \cite{wise} & $0.125$ \\
$e_{\mathrm{adc}}$ & ADC energy per sample \cite{wise} & $1$~pJ/sample \\
$e_{\mathrm{dig}}$ & Digital energy per readout operation \cite{wise} & $1$~pJ/op \\
$e_{\mathrm{digital}}$ & Digital computing energy per MAC~\cite{nvidia_a100_datasheet} & $3$~pJ/MAC \\
$\rho_{\mathrm{radio}}$ & Client RF-chain efficiency \cite{wise} & $0.30$ \\
$\rho_{\mathrm{mixer}}$ & Mixer conversion coefficient \cite{wise} & $0.2512$ \\
$\rho_{\mathrm{nf}}$ & Effective receiver noise coefficient \cite{wise} & $0.2512$ \\
$P_{w,\max}$ & BS transmit power limit~\cite{etsi_bs} & $48$~dBm \\
$P_{x,\max}$ & Client transmit power limit~\cite{etsi_ue} & $23$~dBm \\
$P_{w,0}$ & BS-side reference power level & $0$~dBm \\
$P_{x,0}$ & Client-side reference power level & $0$~dBm \\
$P_{\mathrm{LO,th}}$ & LO-port saturation threshold~\cite{zem4300_datasheet} & $-3$~dBm \\
\bottomrule
\end{tabular}
\end{table}

For LoS clients, the small-scale fading vector is
\begin{equation}
\widetilde{\mathbf h}_k
=
\sqrt{\frac{K_{\mathrm R}}{K_{\mathrm R}+1}}e^{j\varphi_k}\mathbf a(\theta_k,\phi_k)
+
\sqrt{\frac{1}{K_{\mathrm R}+1}}\mathbf r_k,
\label{eq:sim_rician_channel}
\end{equation}
where $K_{\mathrm R}$ is the Rician factor, $\mathbf r_k\sim\mathcal{CN}(\mathbf 0,\mathbf I)$ is the scattering component, and $\mathbf a(\theta_k,\phi_k)$ is the UPA steering vector for elevation angle $\theta_k$ and horizontal azimuth angle $\phi_k$. The scalar phase $\varphi_k$ is drawn independently from $[0,2\pi)$ and models the common propagation phase of the LoS path. This common phase is different from the antenna-dependent phase shifts captured by the steering vector. For NLoS clients, we use Rayleigh fading by setting $\widetilde{\mathbf h}_k=\mathbf r_k$. We adopt the remaining parameters in Table~\ref{tab:constants} in the simulations. 

The duration $T_w^{(\ell)}$ is the time during which the BS broadcasts the weight-encoded waveform for layer $\ell$. It enters the BS energy model because the BS consumes transmit energy only during this waveform duration. $T_w^{(\ell)}$ is determined by frequency-domain tiling. With $L_{\mathrm{fft}}=4096$, subcarrier spacing $\Delta f=B/L_{\mathrm{fft}}$, and $T_{\mathrm{sym}}=1/\Delta f$, we have
\begin{equation*}
  T_w^{(\ell)} = P^{(\ell)} n_{\mathrm{tile}}^{(\ell)} T_{\mathrm{sym}},
  \quad
  n_{\mathrm{tile}}^{(\ell)} = \left\lceil \frac{N^{(\ell)}}{\max\{1,\lfloor L_{\mathrm{fft}}/M^{(\ell)}\rfloor\}} \right\rceil.
\label{eq:tw_schedule}
\end{equation*}
The simulations are all carried out in MATLAB and the convex SCA subproblems are solved using CVX~\cite{cvx}.

\subsection{Convergence and Complexity}\label{sec:sim_convergence_complexity}
\begin{figure}[t]
  \centering
  \subfloat[]{\includegraphics[width=0.225\textwidth]{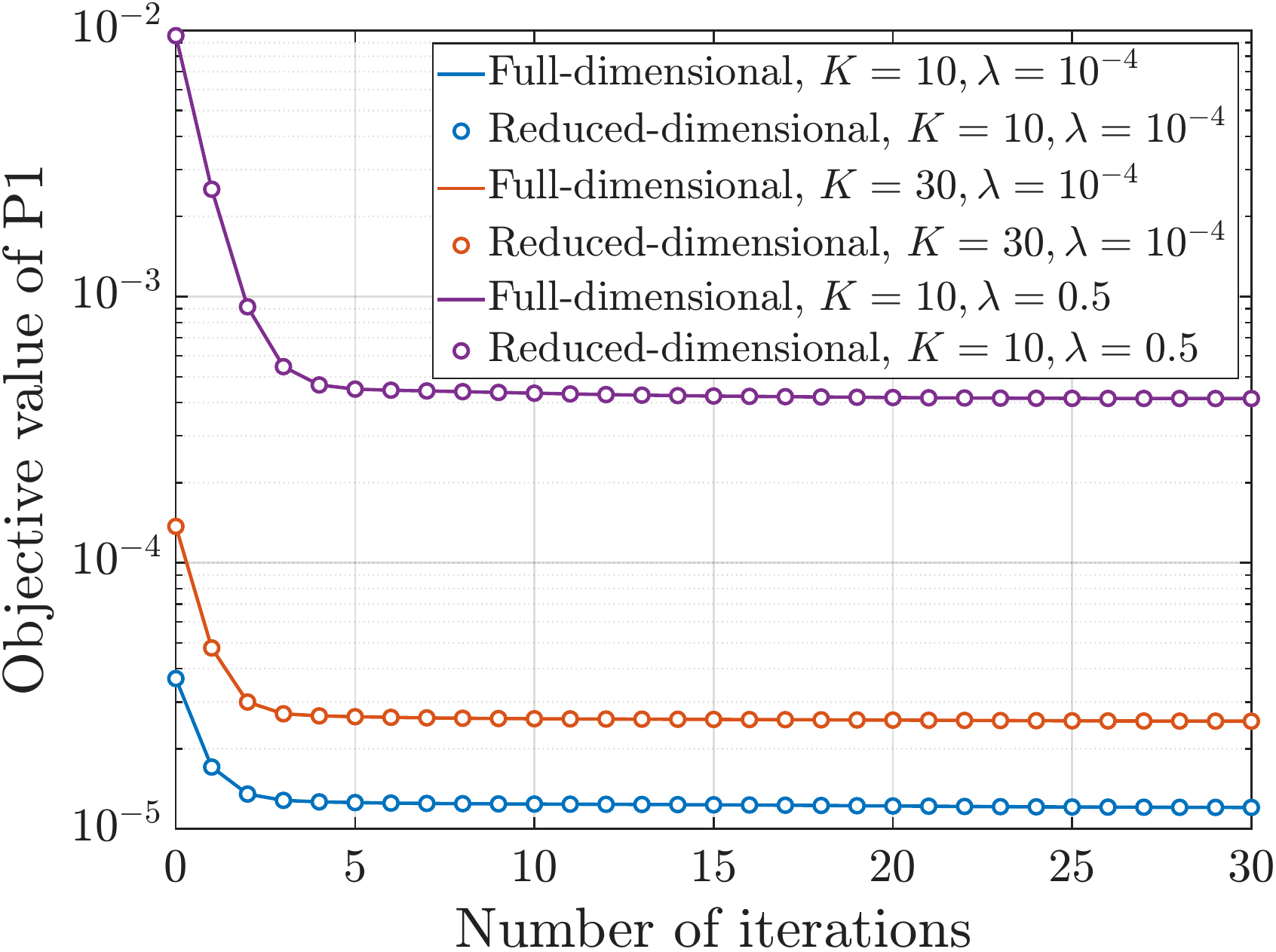}\label{fig:conv}} \,
  \subfloat[]{\includegraphics[width=0.235\textwidth]{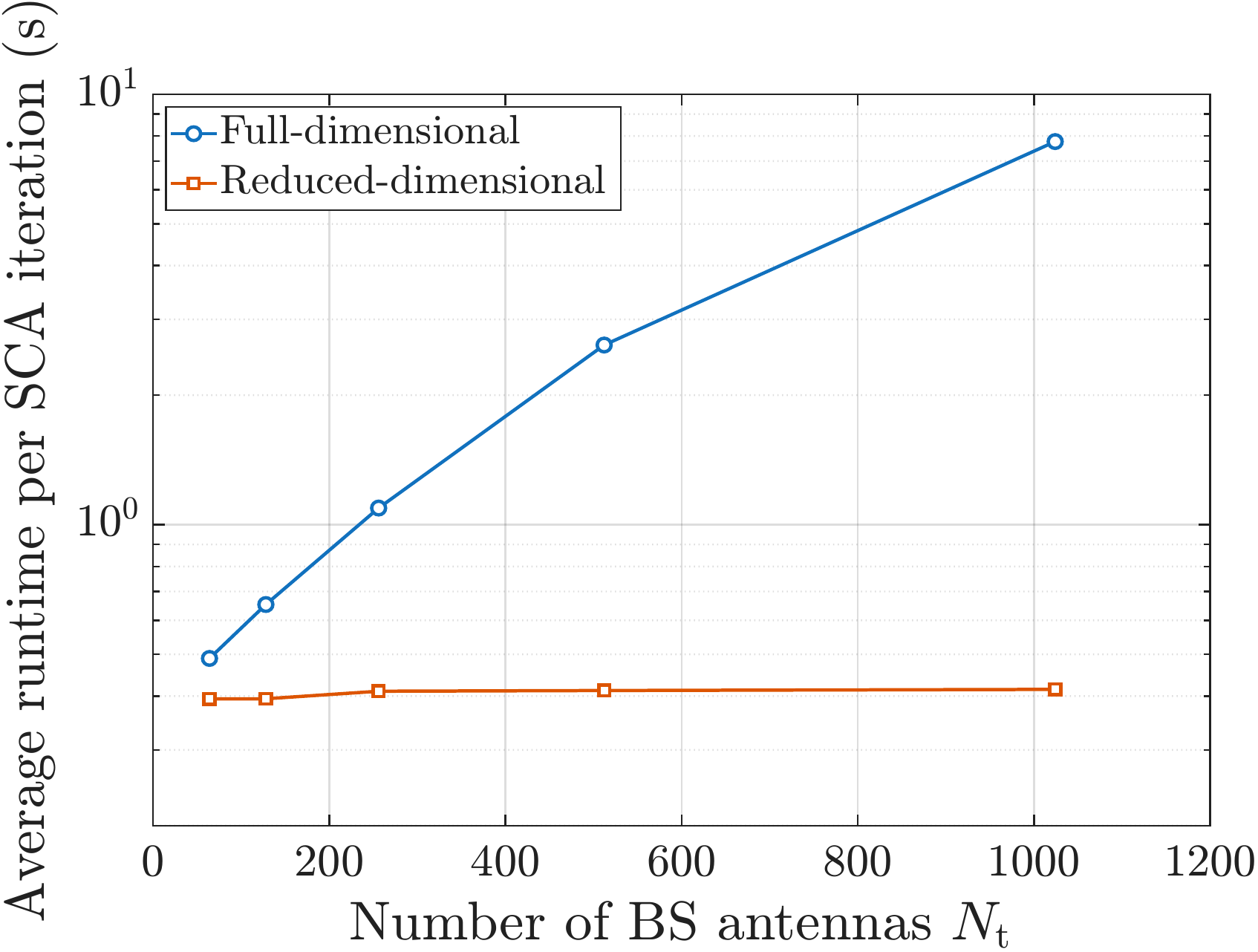}\label{fig:complexity}}
  \caption{Convergence and complexity of the proposed algorithm. }
  \label{fig:alg_results}
\end{figure}

Fig.~\ref{fig:alg_results} shows the convergence and complexity of Algorithm~\ref{alg:layer_sca}. Both subfigures use the first LeNet layer $\ell=1$ to isolate the behavior of the per-layer problem P1. In Fig.~\ref{fig:alg_results}\subref{fig:conv}, we consider uniform-precision inference with a common root NMSE target $\epsilon=0.1$, the maximum number of SCA iterations $I_{\max}$ is set to $30$, and we vary the number of clients $K$ and the weighting factor $\lambda$ as shown in the legend. In Fig.~\ref{fig:alg_results}\subref{fig:complexity}, we fix $K=10$, $\lambda=0.5$, and $\epsilon=0.1$, and sweep $N_\text{t}$ from $64$ to $1024$. The objective value of \eqref{eq:P2} decreases from its communication-oriented MRT-like initialization for all tested settings and converges within about $5$ to $10$ iterations. The reduced-dimensional and full-dimensional curves overlap, which confirms that the exact channel-subspace restriction preserves the original solution trajectory. The runtime comparison further shows that the full-dimensional implementation grows rapidly with $N_\text{t}$, whereas the reduced-dimensional implementation stays nearly flat because its optimization dimension is controlled by the number of clients $K$. These results validate the convergence behavior of Algorithm~\ref{alg:layer_sca} and show the practical value of the reduced-dimensional algorithm. 

\subsection{Energy-Accuracy and BS-Client Tradeoffs}\label{sec:sim_energy_accuracy}
\begin{figure}[t]
  \centering
  \subfloat[]{\includegraphics[width=0.229\textwidth]{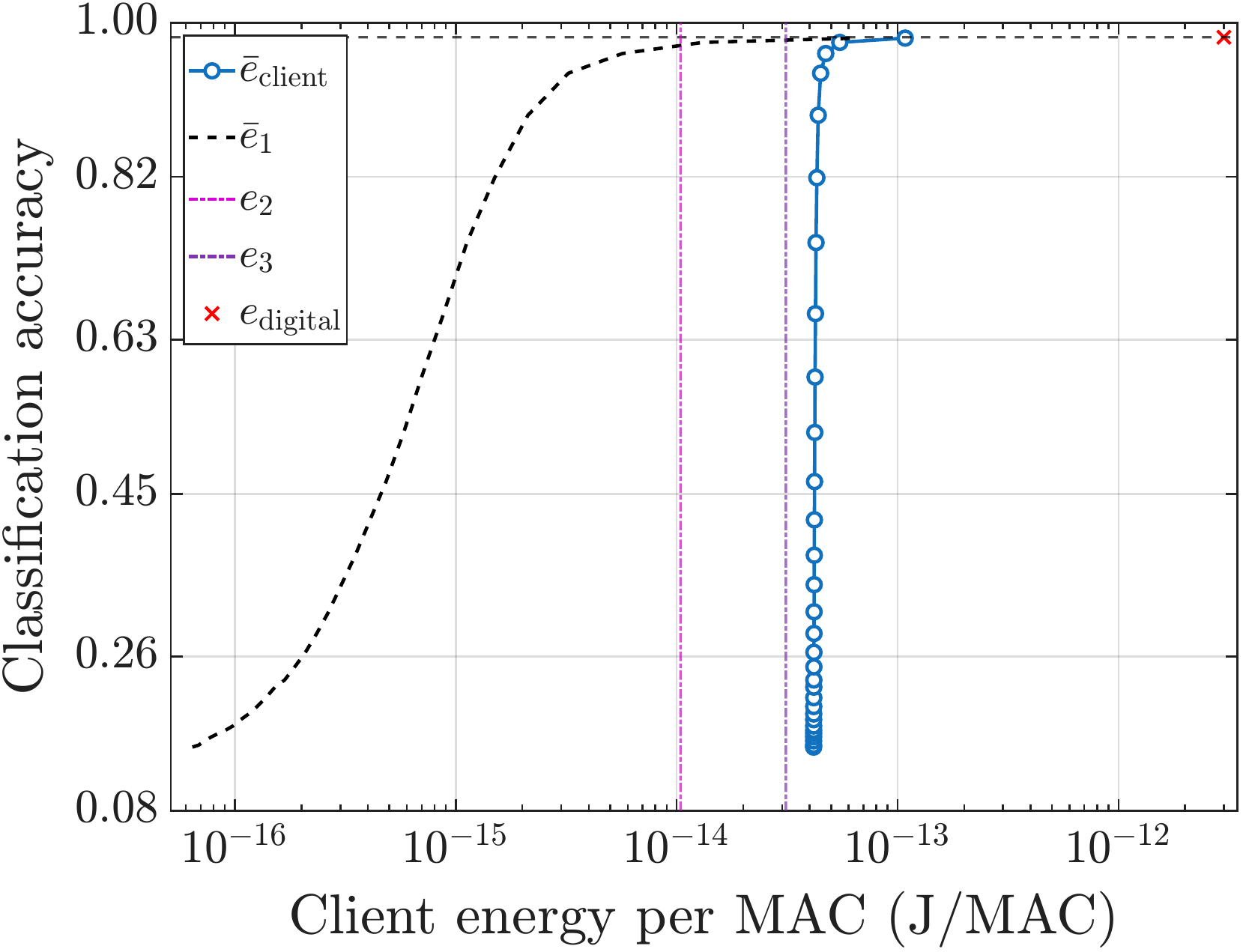}\label{fig:acc_energy}} \,
  \subfloat[]{\includegraphics[width=0.235\textwidth]{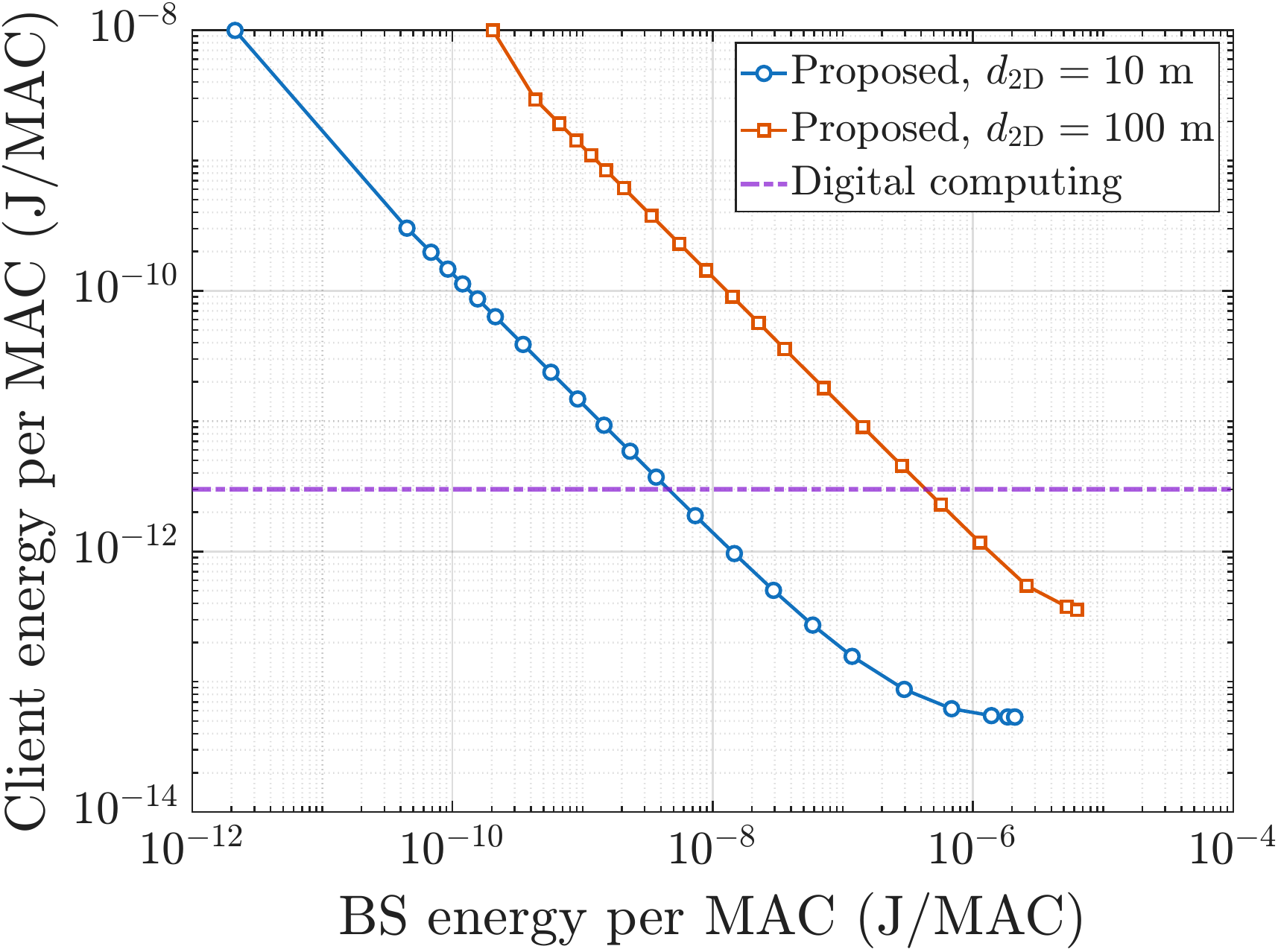}\label{fig:distance_tradeoff}}
  \caption{Energy consumption of analog RF computing-based edge inference.}
  \label{fig:energy_results}
\end{figure}

Fig.~\ref{fig:energy_results} evaluates the end-to-end energy and accuracy tradeoffs under uniform precision. Fig.~\ref{fig:energy_results}\subref{fig:acc_energy} sweeps a common root NMSE target $\epsilon\in[0.05,2.0]$ across all clients and layers under $N_\text{t}=256$, $K=10$, and $\lambda=0$. The dashed curve isolates $\bar e_1$, the dataset-averaged waveform-generation term, while the horizontal offset to the total curve is the fixed cost $e_2+e_3$. The classification accuracy remains close to digital computing over a broad low-energy range. Compared to the fixed digital reference $e_{\mathrm{digital}}=3$~pJ/MAC \cite{nvidia_a100_datasheet}, the proposed analog RF computing design achieves a much lower client-side inference energy in the high-accuracy region.

Fig.~\ref{fig:energy_results}\subref{fig:distance_tradeoff} shows the BS-client energy tradeoff at two propagation distances. We fix $N_\text{t}=256$, $K=10$, and $\epsilon=0.12$, and sweep $\lambda$ over a dense grid. For this experiment, all clients are placed at the same horizontal distance, either with $d_{\text{2D}} = $ $10$~m or $100$~m. The client-side energy decreases as the BS energy increases, because stronger BS-side beamforming gain reduces the required client-side scaling. The $10$~m frontier is shifted down and left relative to the $100$~m frontier, which confirms that stronger channels reduce the energy required at both the BS and the clients for the same computing accuracy. The horizontal dashed line denotes the client-side energy of a digital-computing reference, where the inference computation is carried out locally at the clients. Even at $100$~m, analog RF computing achieves about a $3\times$ reduction in client-side energy compared with this digital reference. This result suggests that the service distance can be further extended when the channel is more favorable, the BS transmit-power budget is higher, or higher-gain antennas are being used. 

\subsection{Uniform- vs. Mixed-Precision Inference}\label{sec:sim_dynamic_precision}
\begin{figure}[t]
  \centering
  \subfloat[]{\includegraphics[width=0.225\textwidth]{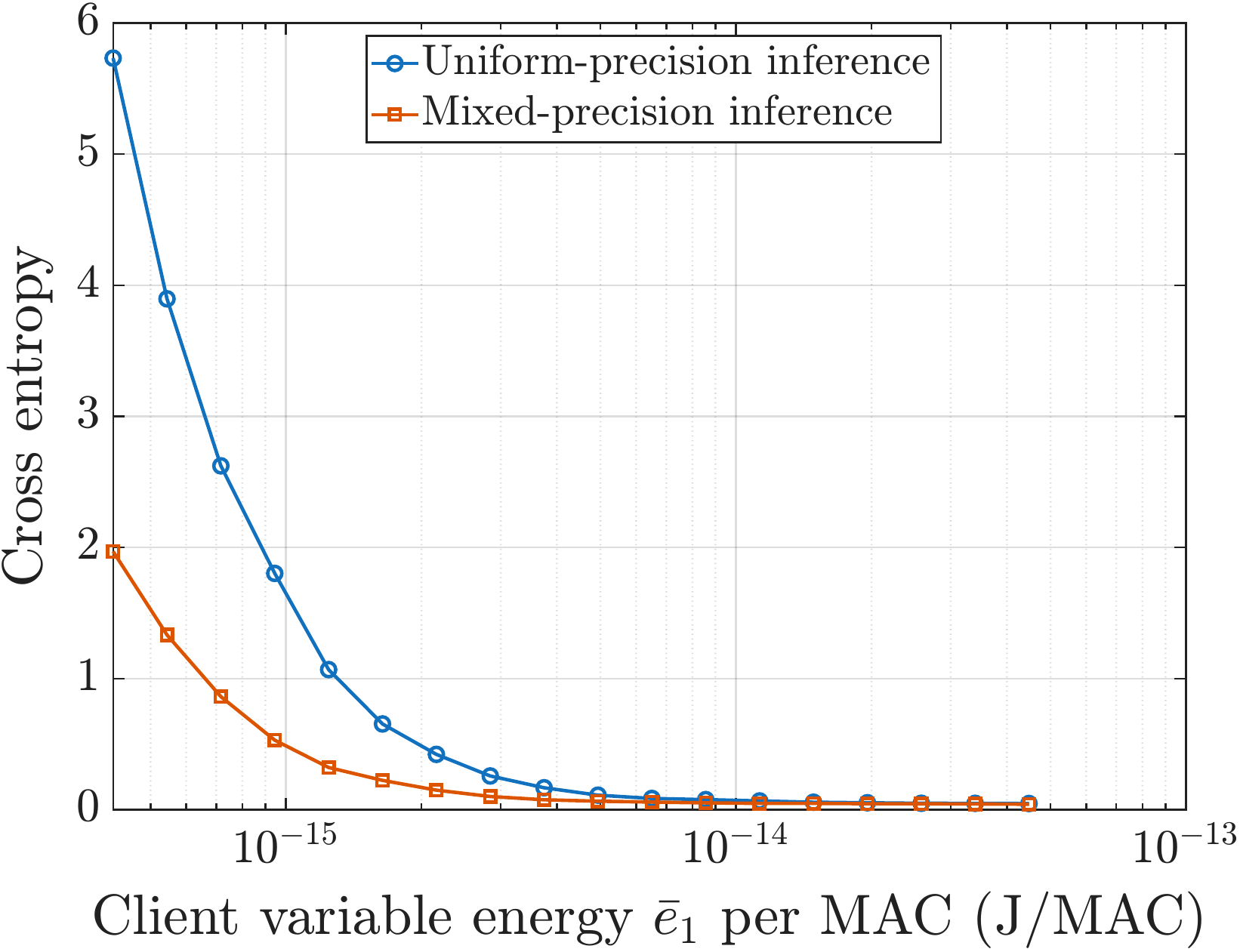}\label{fig:dyn_ce}} \,
  \subfloat[]{\includegraphics[width=0.235\textwidth]{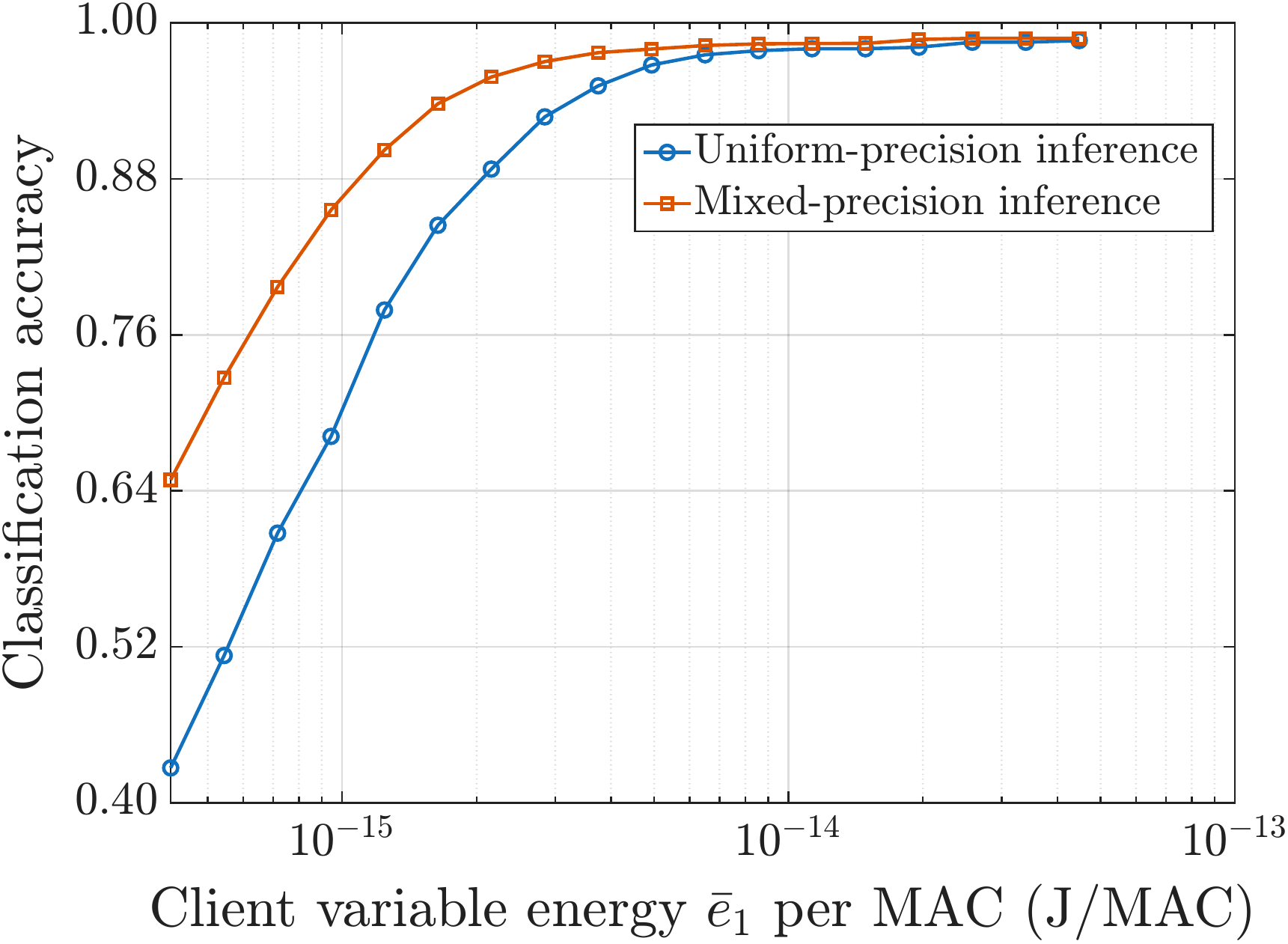}\label{fig:dyn_acc}}
  \caption{Comparison of uniform- and mixed-precision inference.}
  \label{fig:dyn_results}
\end{figure}

We next compare uniform- and mixed-precision inference. In this experiment, we use a block-fading model in which all layers of one inference request experience the same channel realization, while different inference requests correspond to different channel realizations. This separates the offline target-allocation effect from the online channel-adaptive physical-layer design. The uniform-precision inference baseline sweeps a shared root NMSE target $\epsilon_{\mathrm{sh}}\in[0.06,0.95]$ over $18$ logarithmically spaced values for all layers, with $\lambda=0$. For each shared target, the budget $\Gamma_0$ in problem \eqref{eq:dp_lambda0_layerwise} is constructed from the corresponding uniform-precision root NMSE target profile. The mixed-precision root NMSE target profile is then optimized under this budget using the stochastic gradient descent algorithm proposed in Section \ref{sec:uniform_mixed_precision} and is then realized in the physical layer with Algorithm~\ref{alg:layer_sca}. 

Fig.~\ref{fig:dyn_results} reports the mixed-precision results for $N_\text{t}=256$ and $K=6$. Fig.~\ref{fig:dyn_results}\subref{fig:dyn_ce} plots the cross entropy, which is the loss function $\mathcal{L}_{\mathrm{loss}}(\cdot)$ we minimized in problem \eqref{eq:dp_lambda0_layerwise}, and Fig.~\ref{fig:dyn_results}\subref{fig:dyn_acc} plots the classification accuracy, both against the client-side waveform-generation energy $\bar e_1$ per MAC. The mixed-precision inference provides lower cross entropy and higher classification accuracy than the uniform-precision baseline in the low-to-moderate energy budget regime. The gain is most visible when the energy budget is limited, because mixed precision assigns tighter root NMSE targets to layers that are more sensitive to analog computing noise and relaxes the targets of less sensitive layers. As the energy budget increases, the performance gap gradually narrows and they finally converge to the same performance. These results confirm that mixed-precision analog AI inference is an effective way to achieve the same inference performance with lower energy consumption, or in other words, to achieve better performance under the same energy consumption. 

\section{Conclusion}\label{sec:conclusion}
In this paper, we developed a physical layer design framework for analog RF computing-based edge AI inference in MU-MIMO wireless systems. We derived the computing accuracy and energy models that connect the NMSE and energy consumption of analog MVMs to the BS beamformer and client-side scaling coefficients. Based on these models, we formulated a joint BS-side beamforming and client-side scaling problem subject to computing accuracy, transmit power, and hardware operating-region constraints. We further derived the closed-form client-side scaling structure and channel-subspace structure of the BS beamformer, which led to a reduced-dimensional SCA algorithm. We showed how the proposed physical layer design supports both uniform- and mixed-precision inference. These results suggest that treating downlink beamforming as a computing resource, rather than only a communication resource, can enable a new paradigm of energy-efficient edge AI services over wireless networks. Future work will explore integrated communication and analog RF computing, where the BS can simultaneously support data transmission and edge AI over shared wireless resources. 

\balance
\bibliographystyle{IEEEtran}
\bibliography{references_TWC_analog_RF_computing}

\end{document}